\newtheorem{thm}{Theorem}
\newtheorem{lemma}{Lemma}
\newtheorem{corollary}{Corollary}
\newtheorem{definition}{Definition}
\theoremstyle{remark}
\newtheorem{remark}{Remark}
\setlist[enumerate]{leftmargin=.5in}
\setlist[itemize]{leftmargin=.5in}
\begin{document}

\begin{frontmatter}



\title{A Parrondo paradox in susceptible-infectious-susceptible dynamics over periodic temporal networks}

\author[affiliation_1]{Maisha Islam Sejunti}
\author[affiliation_2]{Dane Taylor}
\author[affiliation_1,affiliation_3,affiliation_4]{Naoki Masuda\corref{mycorrespondingauthor1}}
\ead{naokimas@buffalo.edu}

\address[affiliation_1]{Department of Mathematics, State University of New York at Buffalo, NY, 14260-2900, USA}
\address[affiliation_2]{School of Computing and Department of Mathematics and Statistics, University of Wyoming, Laramie, WY, 82071-3036,USA}
\address[affiliation_3]{Institute for Artificial Intelligence and Data Science, State University of NewYork at Buffalo, NY, 14260-5030, USA}
\address[affiliation_4]{Center for Computational Social Science, Kobe University, Kobe 657-8501, Japan}

\begin{abstract}
Many social and biological networks periodically change over time with daily, weekly, and other cycles. Thus motivated, we formulate and analyze  susceptible-infectious-susceptible (SIS) epidemic models over temporal networks with periodic schedules. More specifically, we assume that the temporal network consists of a cycle of alternately used static networks, each with a given duration. 
We observe a phenomenon  in which two static networks are individually above the epidemic threshold but the alternating network composed of them renders the dynamics below the epidemic threshold, which we refer to as a Parrondo paradox for epidemics. We find that network  structure plays an important role in shaping this phenomenon, and we study its dependence on the connectivity between and number of subpopulations in the network. We associate such paradoxical behavior with anti-phase oscillatory dynamics of the number of infectious individuals in different subpopulations.
\end{abstract}

\begin{keyword}
temporal network, SIS model, epidemic threshold, Floquet theory, anti-phase oscillation


\end{keyword}

\end{frontmatter}


\section{Introduction}\label{sec:intro}


The connectivity of social contact networks is a main determinant for how contagions, i.e., spreading of communicative diseases, opinions, rumor, ideas, and so forth, occur in a population.
Epidemic modeling on networks has been a useful tool that can help us understand, predict, and mitigate contagious processes occurring in the physical world and online \cite{keeling2005networks, easley2010networks,danon2011networks,miller2014epidemic,pastor2015epidemic,kiss2017mathematics}. In fact, the structure of social and transportation networks can also dynamically change on the timescale of disease spreading. For example, human and animal mobility is a main obvious reason why contact networks vary over time.
%
%
Airport transportation networks, which consist of airports as nodes and direct commercial flights between two airports as edges, also alter over days, weeks, and longer time \cite{rocha2017dynamics,sugishita2021recurrence}.
Time-varying networks are collectively called temporal networks, and epidemic processes on temporal networks have been extensively studied \cite{bansal2010dynamic,danon2011networks,holme2012temporal,blonder2012temporal,masuda2013predicting,holme2015modern,masuda2017temporal,holme2019temporal,masuda2020guide,masuda2021concurrency}. Ignoring 
variations of a network over time may lead to wrong or inaccurate conclusions about dynamics on networks.  For example, in the SIR process, the number of eventually infected nodes, called the final size, tends to be systematically different between simulations on a given temporal network and those on a counterpart static network (i.e., the aggregated, that is, time-averaged static network corresponding to the temporal network) that ignores temporal nature of the original network, even if the infection and recovery rates are assumed to be the same between the two sets of simulations (e.g., \cite{masuda2013predicting, masuda2021concurrency}). Therefore, simulations on the static network would over-or under-estimate epidemic spread occurring on the given temporal network.

One strategy for mathematically and computationally modeling temporal networks and dynamical processes on them is to use periodic temporal networks. Periodic temporal networks are useful because they allow us to derive and analyze a mapping of the given system's state from one time point to one period after, reminiscent of the Poincar\'{e} map. A useful subclass of periodic temporal networks is periodic switching temporal networks, which are defined as time-varying networks in which one static network is used for a certain duration and then the network switches to another network, which is used for a certain duration, and so forth, and we come back to the first static network after the last one to complete one period, and then repeat the same cycle. It is common to use each static network for the same duration for simplicity. Periodic temporal networks in discrete time are by definition periodic switching temporal networks because each one static network is present in each time step, which generally changes over discrete times. Popularity of using periodic switching temporal networks is presumably due to their analytical tractability. Periodically switching networks have been applied to modeling, e.g., contagion processes 
\cite{gomez2010discrete,valdano2015analytical,speidel2016temporal,onaga2017concurrency,zino2021analysis,somers2023sparse,allen2024compressing} 
evolutionary dynamics~\cite{olfati2007evolutionary,kun2009evolution,li2020evolution,guan2021structural,sheng2023evolutionary,su2023strategy,bhaumik2023fixation,belykh2014evolving,sheng2024strategy}, gene regulation \cite{edwards2000analysis,jenkins2013temporal}, network control \cite{hou2016structural,li2017fundamental,guan2019controllability,hou2022time}, and random walks \cite{perra2012random,starnini2012random,rocha2014random,petit2019classes}. 

A phenomenon that can occur in periodically switching dynamical systems or stochastic processes is the Parrondo paradox. Originally, the Parrondo paradox refers to the situation in which a combination of two losing strategies generates a winning strategy \cite{harmer1999losing,parrondo2000new,harmer2002review}.
When a person alternately and repetitively plays two different losing gambling games (i.e., losing regardless of the player's strategy), the player may still win the game, even on average.  The Parrondo paradox has also been found in theoretically motivated dynamical systems \cite{almeida2005can,canovas2006dynamic,canovas2013revisiting,danca2014generalized,danca2015parrondo,mendoza2018parrondo,cima2020dynamic}, social dynamics models \cite{ye2016effects,lai2020social,ye2020impact,ye2021effects}, evolutionary dynamics on  networks \cite{ye2013multi}, financial investment models \cite{spurgin2005switching,chakrabarti2014switching}, chemical engineering \cite{osipovitch2009systems}, molecular transport \cite{heath2002discrete}, 
dynamics of allele frequency~\cite{reed2007two,abbott2010asymmetry,cheong2019paradoxical}, cancer biology~\cite{capp2021does}, cellular biology \cite{cheong2018multicellular}, and so on. This phenomenon is also known in control theory community; a switching dynamical system composed of alternation of unstable dynamical systems can become stable \cite{liberzon2003switching,yang2009stabilization,yang2011stabilization,yang2014survey,xiang2014stabilization,lu2020stabilization,yang2020exponential}. Models of epidemic dynamics are no exception~\cite{masuda2004subcritical,danca2012parrondo,cheong2020relieving,page202125,danca2023controlling}.
%
%
For example, Cheong \textit{et al.}\,considered a model of COVID-19 outbreak \cite{cheong2020relieving}.
In their model, when the community is not under lockdown, the infection rate is higher and the hospital cost is increased for individual persons. In contrast, when a lockdown is implemented, the total economic costs are increased despite the rate of epidemic spreading being reduced. They argue that implementing either one of these two strategies is a losing strategy. However, switching between the two strategies can reduce the overall cost per day; Cheong \textit{et al.}\, interpreted this to be a winning strategy, thereby representing a Parrondo paradox~\cite{cheong2020relieving}. As another example, in \cite{danca2023controlling}, the authors implemented the Parrondo paradox idea on a periodic parameter switching algorithm with the aim of controlling the COVID-19 outbreak in their model. Thus, the literature suggests that the Parrondo paradox is useful for suppressing epidemic spreading.

In the present study, we exploit this idea for potentially suppressing epidemic spreading over periodically switching networks with explicit network structure to investigate the effects of such structure. Specifically, we study the Parrondo paradox occurring on the susceptible-infectious-susceptible (SIS) dynamics over periodically switching temporal networks. We first find that the Parrondo paradox often occurs for small networks (i.e., those with a few nodes) modeling interaction between a small number of subpopulations of individuals. We corroborate our results with analysis of larger networks having the same structure of interaction between subpopulations. We characterize the observed paradoxical dynamics by quantifying anti-phase oscillation between the time courses of the fraction of infectious individuals in different subpopulations. We finally develop a perturbation theory, which also assists in explaining the reason for the paradoxical dynamics. Code used in this paper is available on GitHub~\cite{githubSejunti}.

\section{Preliminaries\label{sec:back}}

We first review the SIS epidemic model over static networks, definition of temporal networks, and the SIS model over periodic temporal networks. 

\subsection{SIS model over static networks}\label{sec:SIS_static}

In this section, we present the SIS model over static networks and its epidemic threshold.
We consider a static weighted network on $N$ nodes, $G(\mathcal{V} ,\mathcal{E})$, that may have self-loops;
$\mathcal{V} = \{v_1, \ldots, v_N \}$ is the set of nodes, and $\mathcal{E}\subset  \mathcal{V}\times\mathcal{V}$ is the set of edges. We denote the weighted adjacency matrix by $\textbf{A} = (a_{ij})$. By assumption, each node takes one of the two states, i.e., susceptible (denoted by S) or infectious (denoted by I), at any given time $t \in \mathbb{R}$. If node $v_i$ is susceptible and its neighbor $v_j$ is infectious, then $v_j$ infects $v_i$ at a constant rate $\beta a_{ij}$, where $\beta$ ($>0$) is the infection rate. If $v_i$ is adjacent to multiple infectious nodes, each infectious neighbor infects $v_i$ independently of each other. In addition, any infectious node recovers to the susceptible state at a constant recovery rate, denoted by $\mu$ ($>0$). Therefore, the SIS model is a Markov process in continuous time with $2^N$ states.
 
The binary random variable, $X_i(t)$, encodes the state of $v_i$, i.e., $X_i(t) = 0$ and $X_i(t) = 1$ if $v_i$ is susceptible or infectious, respectively. Let $x_i(t)$ be the probability that the node $v_i$ is infectious at time $t$.
Because there is no closed system of ODEs that exactly describes the evolution of 
$x_i(t)$ (with $i \in \{1, \ldots, N\}$), a system of ODEs approximating the evolution of $x_i(t)$, called
the individual-based-approximation (IBA), has been studied~\cite{pastor2015epidemic}. The IBA assumes independence of $\{ x_1(t), \ldots, x_N(t) \}$. Under the IBA, the SIS dynamics is given by \cite{van2008virus,van2009performance,van2012epidemic}
\begin{equation}
\frac{\text{d}x_i(t)}{\text{d}t}=\beta \left[ 1-x_i(t) \right] \sum_{j=1}^N a_{ij}x_j(t)-\mu x_i(t), \quad i\in \{1, \ldots, N \}.
\label{nonlinear_final_SIS}
\end{equation}
Equation~\eqref{nonlinear_final_SIS} can be expressed in the succinct form as
\begin{equation}
\frac{\text{d}\textbf{x}(t)}{\text{d}t}=\left[\beta \textbf{A} -\beta \text{diag}(x_i(t))\textbf{A} - \mu \textbf{I}\right]\textbf{x}(t),
 \label{nonlinear_matrix_SIS}
\end{equation}
where $\textbf{x}(t)=[x_1(t),\dots,x_N(t)]^{\top}$, the transposition is denoted by ${}^{\top}$,
$\text{diag}(x_i(t))$ represents the diagonal matrix with diagonal entries $x_1(t)$, $\ldots$, $x_N(t)$, and
$\textbf{I}$ is the $N\times N$ identity matrix.

The linearized ODE for Eq.~\eqref{nonlinear_matrix_SIS} under the assumption that $x_i(t)\approx 0$, $\forall$ $i \in \{1, \ldots, N \}$, or that the infection rate is posed such that epidemic dynamics is near the disease-free equilibrium, is given by
\begin{equation}
    \frac{\text{d}\textbf{x}(t)}{\text{d}t} =(\beta\textbf{A}-\mu \textbf{I})\textbf{x}(t) \equiv \textbf{M} \textbf{x}(t).
\label{linearizesismodel} 
\end{equation}
Equation~\eqref{linearizesismodel} leads to
\begin{equation}
\textbf{x}(t) = e^{\textbf{M}t}\textbf{x}(0)= e^{(\beta \textbf{A} - \mu \textbf{I})t}\textbf{x}(0).
\label{eq:linear-solution}
\end{equation}
It is known that Eq.~\eqref{eq:linear-solution} provides an upper bound of the fraction of infectious nodes in the original stochastic SIS dynamics. Specifically, the probability that the infection persists in the network is upper-bounded by \cite{duff1966differential,ganesh2005effect,van2008virus,ogura2016stability}
\begin{equation}
Pr[X(t) \neq 0] 
\leq \sqrt{N ||X(0)||_1} e^{ \lambda_{\max}(\textbf{M}) t }
= \sqrt{N ||X(0)||_1} e^{(\beta \lambda_{\max}(\textbf{A})- \mu)t },
\label{bound_of_solution}
\end{equation}
where $X(t) = (X_1(t), \ldots, X_N(t))$, $\left\|X(0)\right\|_1=\sum_{i=1}^N X_i(0)$ represents the number of initially infectious nodes, and $\lambda_{\max}$ represents the largest eigenvalue of a matrix in modulus. Note that the Perron-Frobenius theorem guarantees that $\lambda_{\max}(\textbf{A})$ is real and positive; the theorem holds true because the entries in $\textbf{A}$ are nonnegative and $\textbf{A}$ is irreducible.

Under the IBA, the epidemic grows if and only if \cite{van2008virus,ogura2016stability}
\begin{equation}
\lambda_{\max}(\textbf{M}) >0 .
\label{epidemic_threshold_M}
\end{equation}
Since $\lambda_{\max}( \textbf{M}) = \beta\lambda_{\max}(\textbf{A})-\mu$, Eq.~\eqref{epidemic_threshold_M} is equivalent to $\beta\lambda_{\max}(\textbf{A})>\mu$, or
$\frac{\beta}{\mu}>\frac{1}{\lambda_{\max}(\textbf{A})}.$
Therefore, under the IBA, the epidemic threshold, i.e., the value of the infection rate above which the epidemic grows, is given by $\beta_{\text{c}} = \mu / \lambda_{\max}(\textbf{A})$.

Another interpretation of Eqs.~\eqref{nonlinear_final_SIS}, \eqref{nonlinear_matrix_SIS}, \eqref{linearizesismodel}, \eqref{eq:linear-solution}, and \eqref{epidemic_threshold_M} is that a node represents a subpopulation of individuals \cite{krause2018stochastic,vizuete2020graphon,wang2021suppressing}.
In this case, the network is that of subpopulations, and $x_i(t)$ represents the fraction of infectious individuals in the $i$th subpopulation at time $t$.
Figure~\ref{fig:schemetic_figure_of_SBM} illustrates a network of individuals that are divided into two equal-sized subpopulations (with $10^3$ individuals each). The natural coarse graining of this network yields a $2\times 2$ matrix $\textbf{A}$, which encodes connectivity between the two subpopulations and within each subpopulation. To allow this interpretation, we allow the network to have self-loops (i.e., positive diagonal entries of $\textbf{A}$).

\begin{figure}[t]
\begin{center}
\includegraphics[scale=.40]{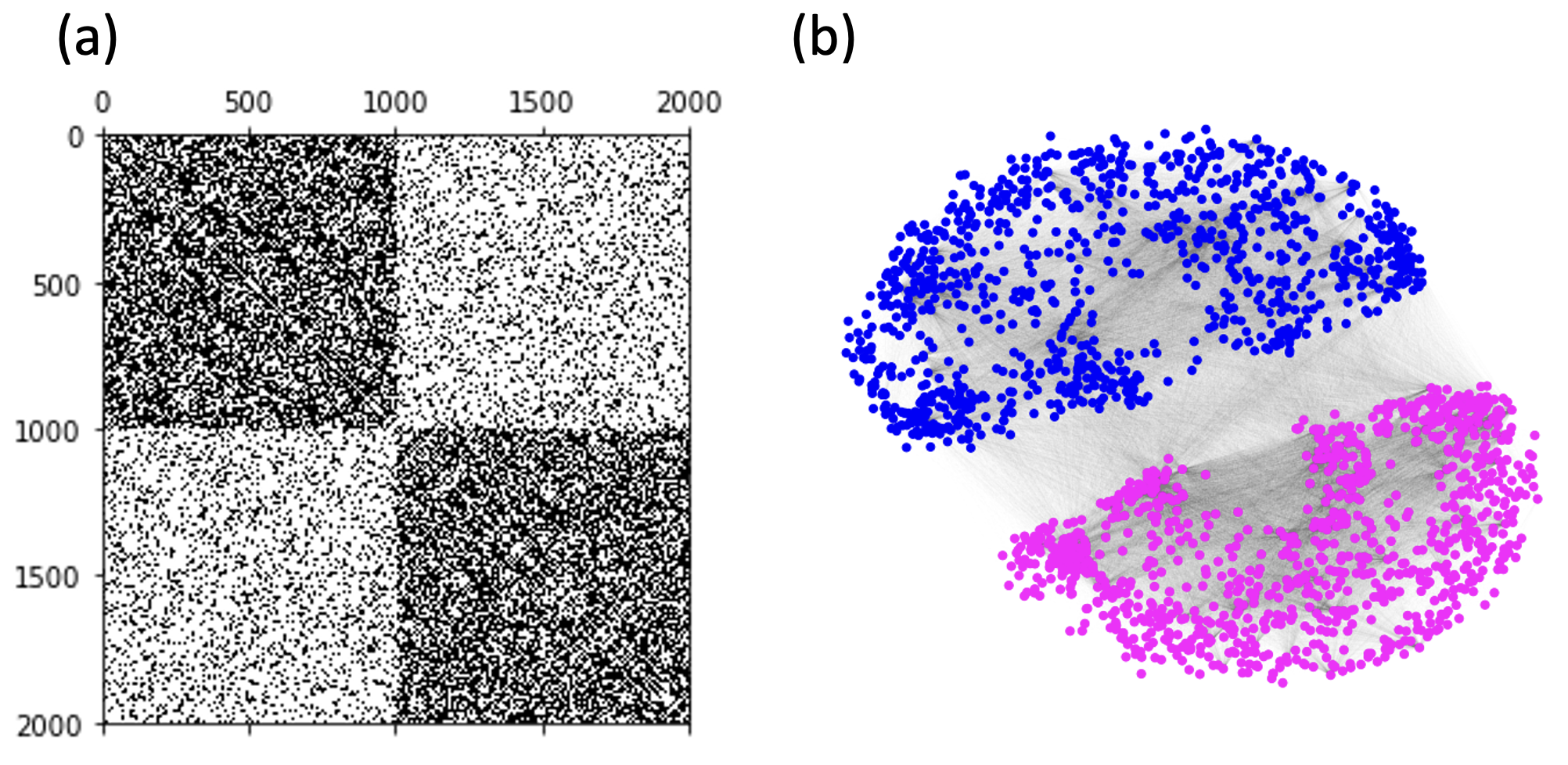}
\caption{A network with $N=2000$ individuals organized in two subpopulations. 
(a) Adjacency matrix. Black dots indicate edges in the network, which are encoded by nonzero $A_{ij}$ entries. By grouping the first $10^3$ nodes into a subpopulation and the remaining $10^3$ nodes into another, we obtain a network composed of two subpopulations with self-loops. The subpopulation structure can be summarized by considering the probabilities of connection within and between subpopulations (which can be stored in a $2 \times 2$ matrix).
(b) Network visualization. Nodes (e.g., people) in the first and second subpopulation are colored blue and magenta, respectively.
}
\label{fig:schemetic_figure_of_SBM}
\end{center}
\end{figure}


\subsection{Periodic switching temporal networks} \label{sec:temp} 

The time-varying $N\times N$  weighted adjacency matrix of the temporal network is denoted by $\textbf{A}(t)= (a_{ij}(t))$, where $a_{ij}(t)$ represents the weight of the edge from node $v_i$ to $v_j$ at time $t \in \mathbb{R}$.
When $\textbf{A}(t+T) = \textbf{A}(t)$ for any $t$, we say that the temporal network is periodic and that $T$ is the period.

We consider a special case of periodic temporal networks in which $\textbf{A}(t)$ switches between a finite number of static adjacency matrices according to a periodic schedule. We call such a periodic temporal network a periodic switching (temporal) network. The periodic switching network is defined by a sequence of adjacency matrices $\{\textbf{A}^{(1)},\textbf{A}^{(2)}, \ldots,\textbf{A}^{(\ell)}\}$ and the duration of each $\ell'$th static network denoted by $\tau_{\ell'}$ (with $\ell' \in \{ 1, \ldots, \ell \}$). The periodic switching network is defined by
\begin{equation}
\textbf{A}(t) = \begin{cases}
\textbf{A}^{(1)} & (nT \le t < nT + \tau_1),\\
\textbf{A}^{(2)} & (nT + \tau_1 \le t < nT + \tau_1 + \tau_2),\\
\; \; \vdots & \quad \quad \quad \vdots\\
\textbf{A}^{(\ell)} & (nT + \tau_1 + \cdots + \tau_{\ell-1} \le t < (n+1)T),
 \end{cases}
\end{equation}
where $n \in \mathbb{Z}$. Note that $T = \sum_{\ell' = 1}^{\ell} \tau_{\ell'}$.

In Fig.~\ref{fig:examplenetwork}, we visualize a periodic switching network with $\ell=2$ that obeys a weekly cycle (i.e., period $T=7$ days) and contains $N=2$ nodes. In each cycle, the first adjacency matrix, $\textbf{A}^{(1)}$, is used in the first five days (i.e., $\tau_1 = 5$ days), corresponding to the weekdays, and then the second adjacency matrix, $\textbf{A}^{(2)}$, is used in the next two days (i.e., $\tau_2 = 2$ days), corresponding to the weekend. Each node of this periodic switching network may represent a group of people, and there are different disease transmission rates within and between members of two groups, and at different times (i.e., weekday versus weekend).

\begin{figure}[t]
\begin{center}
\includegraphics[scale=0.4]{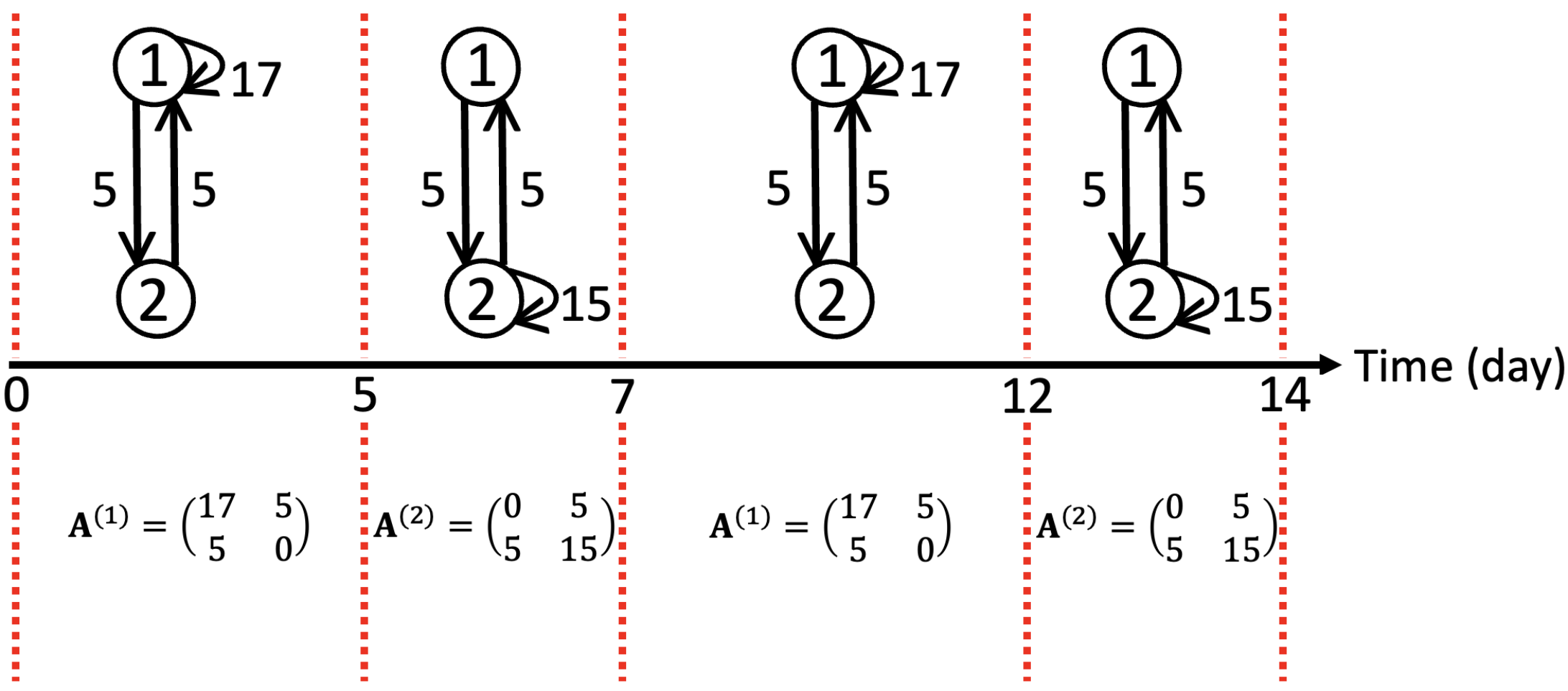}
\caption{A periodically switching network with $N=2$ nodes, a period of $T=7$ days, and two sub-intervals of lengths $\tau_1=5$, corresponding to weekdays, and $\tau_2=2$, corresponding to weekends.
}
\label{fig:examplenetwork}
\end{center}
\end{figure}

\subsection{SIS model over periodic temporal networks\label{sec:SIS_temporal}}

We now consider the SIS model in continuous time on periodic temporal networks. First of all,
the IBA for temporal networks in continuous time is given by \cite{allen2024compressing} 
\begin{equation}
  \frac{\text{d}x_i(t)}{\text{d}t} =\beta \left[ 1-x_i(t) \right] \sum_{j=1}^N a_{ij}(t) x_j(t)-\mu x_i(t).
  \label{eq:IBA-temporal-general}
\end{equation}
Around the disease-free equilibrium, i.e., ${\bf x}(t)={\bf 0}$, the linearized SIS dynamics (which assumes small $x_i(t)$ values) is given by
\begin{equation}
    \frac{\text{d}\textbf{x}(t)}{\text{d}t}= \left[ \beta\textbf{A}(t) -\mu \textbf{I} \right] \textbf{x}(t) \equiv \textbf{M}(t) \textbf{x}(t).
\label{eq:linearize_general_periodic_temporal_network} 
\end{equation}
Equation~\eqref{eq:linearize_general_periodic_temporal_network} yields
\begin{equation}
\textbf{x}(T)=e^{\int_0^T \textbf{M}(t)\text{d}t} \textbf{x}(0) \equiv \mathcal{T} \textbf{x}(0).
\label{eq:x(T)_general_periodic_temporal_network}
\end{equation}

To prove that $\lambda_{\max}(\mathcal{T})$ is real and positive, we show that $\mathcal{T}$ is a matrix of which all the entries are positive under the assumption that the time-averaged network is strongly connected so that its adjacency matrix is nonnegative and irreducible.
To this end, we start with
\begin{equation}
\mathcal{T}
%
%
=e^{\int_0^T (\beta \textbf{A}(t)-\mu \textbf{I})\text{d}t}
=e^{\beta\textbf{E}(T) -\mu T \textbf{I} },
\label{matrix_exponential}
\end{equation}
where $\textbf{E}(T)= (e_{ij}(T)) =  \int_0^T \textbf{A}(t)\text{d}t$. 
(Note that one has ${\bf E}(T) = \sum_{\ell' = 1}^{\ell} \tau_{\ell'} {\bf A}^{(\ell')}$ for a periodic switching network.) Because the identity matrix commutes with any matrix, $\textbf{E}(T)$ commutes with the identity matrix. 
Therefore, we obtain \cite{hall2013lie}
\begin{equation}
\mathcal{T}
%
%
=e^{\beta\textbf{E}(T)}\times e^{-\mu T \textbf{I} }
=e^{\beta\textbf{E}(T)}\times e^{-\mu T }\textbf{I}
=e^{\beta\textbf{E}(T)}\times e^{-\mu T }.
\label{proof_of_commute}
\end{equation}
Because $\beta>0$ and all the entries in $\textbf{A}(t)$ are nonnegative, all the entries of $\textbf{E}(T)$ are nonnegative. Let us assume that $\textbf{E}(T)$ is an irreducible matrix (i.e., $ [\textbf{E}(T)^k]_{ij} > 0$ $\forall i, j$ for some integer $k$). This condition is equivalent to assuming that the time-averaged network is strongly connected. The irreducibility and nonnegativity of $\textbf{E}(T)$ implies that $e^{\beta\textbf{E}(T)}$ is a strictly positive matrix (i.e., $[e^{\beta \textbf{E}(T)}]_{ij}>0$ $\forall i,j$)
%
\cite{hall2013lie, meyer2023matrix}. We show the proof in \ref{lemma_proof}. This observation, combined with Eq.~\eqref{proof_of_commute} and $e^{-\mu T } > 0$, guarantees that $\mathcal{T}$ a strictly positive matrix. Therefore, $\lambda_{\max}(\mathcal{T})$ is real and positive owing to the  Perron-Frobenius theorem. 

The epidemic threshold is given by the value of $\beta$ at which $\lambda_{\max}(\mathcal{T})=1$. Eigenvalue $\lambda_{\max}(\mathcal{T})$ is equivalent to the largest Floquet multiplier in Floquet theory (see \ref{sec:flo} for a brief review of a Floquet theory for periodic linear dynamical systems). 
We also define $\lambda_\text{F} \equiv T^{-1} \ln \lambda_{\max}(\mathcal{T})$, which gives the average growth rate of the infection. Under the IBA, the epidemic grows on a periodic temporal network if and only if 
\begin{equation}
\lambda_{\text{F}} >0, 
\label{floquet_criterion}
\end{equation} 
thereby generalizing Eq.~\eqref{epidemic_threshold_M} for the temporal network case.
Note that
$\lambda_\text{F} > 0$ is equivalent to $\lambda_{\max}(\mathcal{T})>1$ and vice versa.
Because $\lambda_\text{F}$ is the largest of the Floquet exponents in Floquet theory (see \ref{sec:flo}), we refer to $\lambda_\text{F}$ as the largest Floquet exponent in the following text.

In the case of periodic switching networks, Eq.~\eqref{eq:IBA-temporal-general} is reduced to
\cite{speidel2016temporal,valdano2018epidemic}
\begin{equation}
  \frac{\text{d}x_i(t)}{\text{d}t} =\beta (1-x_i(t)) \sum_{j=1}^N a^{(\ell')}_{ij}x_j(t)-\mu x_i(t),
  \label{nonlinear_final_SIS_temporal}
\end{equation}
where $a^{(\ell')}_{ij}$ is the $(i, j)$ entry of $\textbf{A}^{(\ell')}$ , and $\ell' \in \{ 1, \ldots, \ell \}$ is the unique value satisfying
$t\in [nT + \tau_1 + \cdots + \tau_{\ell'-1}, nT + \tau_1 + \cdots + \tau_{\ell'-1} + \tau_{\ell'})$, $\exists n \in \mathbb{Z}$. Linearization of Eq.~\eqref{nonlinear_final_SIS_temporal} around the disease-free equilibrium yields
\begin{equation}
    \frac{\text{d}\textbf{x}(t)}{\text{d}t}=(\beta\textbf{A}^{(\ell')}-\mu \textbf{I})\textbf{x}(t)
\label{linearize_sismodel_conts_temporal} 
\end{equation}
with the same value of $\ell'$ as that in Eq.~\eqref{nonlinear_final_SIS_temporal}. Equation~\eqref{linearize_sismodel_conts_temporal} gives
Eq.~\eqref{eq:x(T)_general_periodic_temporal_network} with
\begin{equation}
\mathcal{T} = e^{(\beta \textbf{A}^{(\ell)}-\mu I)\tau_{\ell}} \cdots e^{(\beta \textbf{A}^{(1)}-\mu I)\tau_1}.
\label{linearize_sismodel_conts_temporal_new} 
\end{equation}
Note that $\mathcal{T} = e^{\beta \textbf{A}^{(\ell)}\tau_{\ell}} \cdots e^{\beta \textbf{A}^{(1)}\tau_1} e^{- \mu T}$ because any $e^{- \mu I \tau_{\ell'}}$ $(= e^{- \mu \tau_{\ell'}} I)$ commutes with any matrix.

\section{Results}\label{sec:exps}

In this section, we investigate the Parrondo paradox for the SIS dynamics over periodic switching networks. Specifically, epidemics on a temporal network can be sub-critical (i.e., the number of infectious nodes exponentially decays over time) over periods even if the SIS dynamics at any point of time is super-critical (i.e., the number of infectious nodes exponentially grows over time if the momentary static network is used forever). In section \ref{sec:par}, we showcase the Parrondo paradox on a two-node periodic switching network. Section \ref{sec:interaction} shows the relationship between the amount of interaction and epidemic spreading. Section \ref{sec:generality_paradox} illustrates the generality of the Parrondo paradox. We show relationships between the Parrondo paradox and anti-phase oscillation in section \ref{sec:decomposition}. Section \ref{sec:perturbation} presents a perturbation theory for the largest Floquet exponent.

\subsection{Parrondo paradox for a periodic switching network with two nodes}\label{sec:par}

We start by studying the SIS dynamics on a minimal periodic switching network with two nodes and $\ell = 2$ static networks to be switched between. We say that the Parrondo paradox occurs if the largest eigenvalues of ${\textbf M}^{(1)} \equiv \beta {\textbf A}^{(1)} - \mu {\textbf I}$ and that of ${\textbf M}^{(2)} \equiv \beta {\textbf A}^{(2)} - \mu {\textbf I}$, i.e.,
$\lambda_{\max}(\textbf{M}^{(1)})$ and $\lambda_{\max}(\textbf{M}^{(2)})$, are both positive and the largest Floquet exponent, $\lambda_\text{F}$, is negative. In this case, the SIS dynamics at any point of time is above the epidemic threshold such that the fraction of infectious nodes exponentially grows over time if either ${\textbf A}^{(1)}$ or ${\textbf A}^{(2)}$ is permanently used, whereas the epidemic exponentially decays over time on the periodic switching network. Generalizing this definition to the case of $\ell > 2$ is straightforward. We only study the case of $\ell=2$ in this article.

\begin{figure}[t]
\centering
\includegraphics[width=1\textwidth]{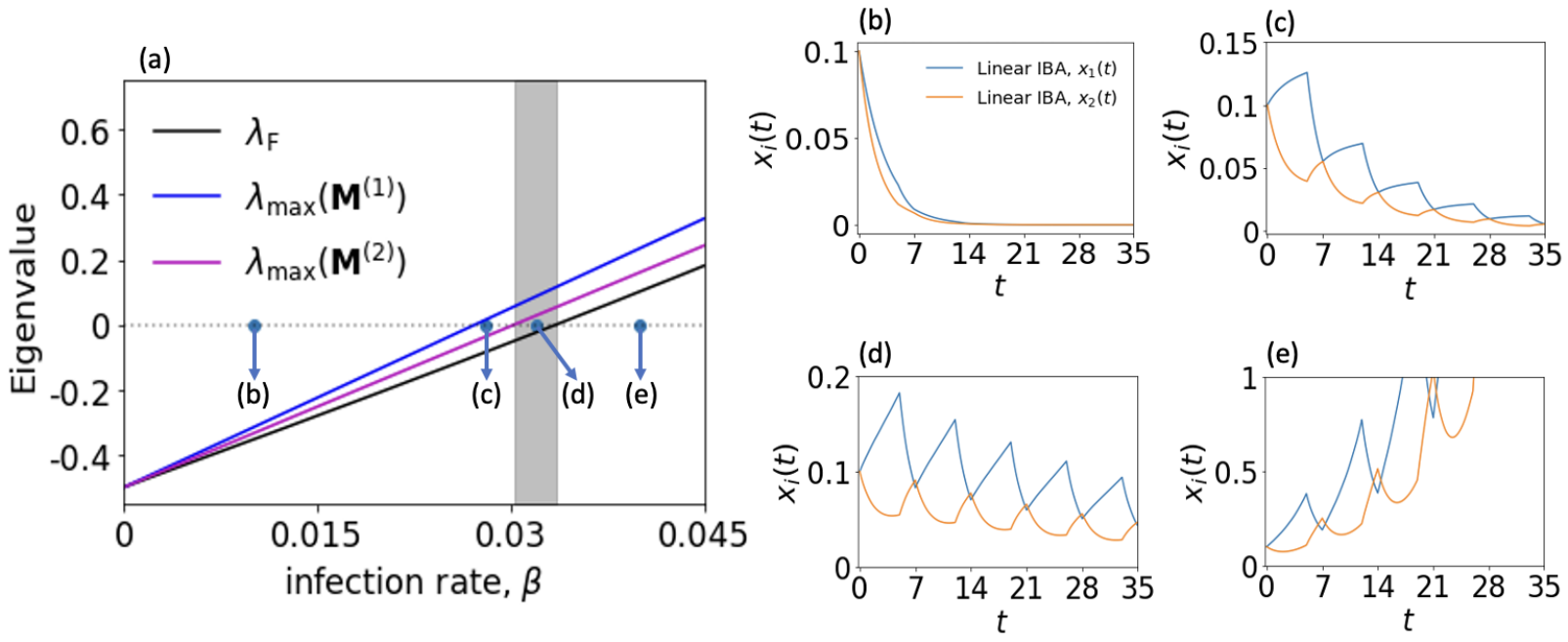}
\caption{Parrondo paradox for the periodic switching network with two nodes shown in Fig.~\ref{fig:examplenetwork}. We set $\mu = 0.5$ and vary $\beta$. (a) Largest Floquet exponent, $\lambda_{\text{F}}$ (black line), largest eigenvalue of $\textbf{M}^{(1)}$ (blue line), and that of $\textbf{M}^{(2)}$ (magenta line) as a function of $\beta$. The shaded region shows the range of $\beta$ in which the Parrondo paradox occurs.
(b)--(e) Time courses of $x_1(t)$ and $x_2(t)$, which represent the infection probabilities in two nodes, for the four values of $\beta$ indicated in (a):
(b) $\beta=0.01$, (c) $\beta=0.028$, (d) $\beta=0.032$, and (e) $\beta=0.04$.
}
\label{Floquet_multipliers_of_network}  
\end{figure}

We show an example of the Parrondo paradox in Fig.~\ref{Floquet_multipliers_of_network}, where we set
$\mu=0.5$. Figure~\ref{Floquet_multipliers_of_network}(a) shows $\lambda_{\text{F}}$, 
$\lambda_{\max}(\textbf{M}^{(1)})$, and $\lambda_{\max}(\textbf{M}^{(2)})$
as a function of $\beta$ for the periodic switching network shown in Fig.~\ref{fig:examplenetwork}, i.e., the one with $\ell=2$, $T=7$, $\tau_1 = 5$, $\textbf{A}^{(1)}=\begin{pmatrix} 17&5\\5&0\end{pmatrix}$, and $\textbf{A}^{(2)}=\begin{pmatrix} 0&5\\5&15\end{pmatrix}$. As expected, $\lambda_{\text{F}}$, $\lambda_{\max}(\textbf{M}^{(1)})$, 
and $\lambda_{\max}(\textbf{M}^{(2)})$ increase as the infection rate, $\beta$, increases. For static networks with adjacency matrices $\textbf{A}^{(1)}$ and $\textbf{A}^{(2)}$, the epidemic threshold, which is the value of $\beta$ at which $\lambda_{\max}(\textbf{M}^{(1)})$ or $\lambda_{\max}(\textbf{M}^{(2)})$ is equal to $0$,
denoted by $\beta_{1}^*$ and $\beta_2^*$, respectively, is $\beta_1^* \approx 0.0272$ and $\beta_2^* \approx 0.0303$, where $\approx$ represents ``approximately equal to''. However, the epidemic threshold for the periodic switching network, denoted by $\beta_{\text{F}}^*$, is larger than both of these values, i.e., $\beta_{\text{F}}^* \approx 0.0335$. 

The shaded region in Fig.~\ref{Floquet_multipliers_of_network}(a) represents the range of $\beta$ for which we have super-critical (i.e., above the epidemic threshold) dynamics on the static networks (i.e., $\lambda_{\max}(\textbf{M}^{(1)}), \lambda_{\max}(\textbf{M}^{(2)})>0$) and sub-critical dynamics on the periodic switching network (i.e., $\lambda_{\text{F}}<0$). In this situation, the Parrondo paradox is occurring such that the disease-free equilibrium is the unique stable equilibrium for Eq.~\eqref{linearize_sismodel_conts_temporal}, whereas, if the network were static then the disease-free equilibrium would be unstable and the number of infectious nodes would exponentially grow over time. Now we further explore the SIS dynamics at the $\beta$ values indicated by the circles in Fig.~\ref{Floquet_multipliers_of_network}(a). In Fig.~\ref{Floquet_multipliers_of_network}(b), (c), (d), and (e), we show $x_1(t)$ and $x_2(t)$ for $\beta = 0.01$, $\beta = 0.028$, $\beta = 0.032$, and $\beta = 0.04$ respectively. 
We selected these four values of $\beta$ because each of them represents a qualitatively different situation regarding whether 
$\lambda_{\text{F}}$, $\lambda_{\max}(\textbf{M}^{(1)})$, or $\lambda_{\max}(\textbf{M}^{(2)})$ is positive or negative.
In Fig.~\ref{Floquet_multipliers_of_network}(b), $x_1(t)$ and $x_2(t)$ exponentially decay at any instantaneous time and in a long term. This result is expected because the growth rates of infection, $\lambda_{\text{F}}$, $\lambda_{\max}(\textbf{M}^{(1)})$, and $\lambda_{\max}(\textbf{M}^{(2)})$, are all negative at this value of $\beta$. 
The time courses of $x_1(t)$ and $x_2(t)$ show anti-phase behavior when $\beta$ is poised near (see Fig.~\ref{Floquet_multipliers_of_network}(c)) or within (see Fig.~\ref{Floquet_multipliers_of_network}(d)) the region in which the Parrondo paradox is present. In Fig.~\ref{Floquet_multipliers_of_network}(e), both $x_1(t)$ and $x_2(t)$ exponentially grow over time in the long term. 
This last result is expected because $\lambda_{\text{F}}$, $\lambda_{\max}(\textbf{M}^{(1)})$, and $\lambda_{\max}(\textbf{M}^{(2)})$ are all positive at this value of $\beta$. 

To show that the Parrondo paradox can also be present in agent-based SIS dynamics, we simulated the stochastic, agent-based SIS model by the direct method of the Gillespie algorithm (e.g., \cite{masuda2022gillespie}). To this end, we generated a network with $N=2000$ nodes and two equally sized communities using a stochastic block model (SBM)~\cite{abbe2017community, lee2019review}. See \ref{appendix:gillespie} for the network generation. We used the entries of $\textbf{A}^{(1)}$ and $\textbf{A}^{(2)}$ to inform the edge probabilities for the periodic switching SBM and ran SIS simulations for the four values of $\beta$ used in Fig.~\ref{Floquet_multipliers_of_network}(b)--(e).
We assumed that there are initially 100 infectious nodes out of the 1000 nodes, which we selected uniformly at random, in each block, or community. The results shown in  \ref{appendix:gillespie} indicate the presence of the Parrondo paradox in these stochastic simulations although the paradox's outcome is less eminent than in the case of ODE simulations presented in this section. This discrepancy is partly due to the nonlinearity of the SIS dynamics, which is also present in the deterministic SIS dynamics, i.e., the IBA before the linearization (see Eq.~\eqref{nonlinear_final_SIS_temporal}).

\subsection{Relationships between the amount of interaction and epidemic spreading \label{sec:interaction}}

We propose that an advantage of the Parrondo paradox, or using periodic switching networks showing it, is to have comparably less epidemic spreading for the given total amount of interaction between individuals. 
To examine this point, we define the total amount of interaction between individuals in the case of static networks by the sum of all the entries of the adjacency matrix. In the case of undirected networks, this definition double counts the off-diagonal entries but provides a correctly normalized total amount of interaction. Imagine a static SBM with two blocks in which each block has $N/2$ individuals and the edge exists between two individuals in the first subpopulation with probability $a_{11}/N$, for example. Then, the expected number of edges, which is equivalent to the amount of interaction in the case of static networks, within the first subpopulation is $\frac{a_{11}}{N} \times \left[ \frac{1}{2} \times \frac{N}{2}\left(\frac{N}{2}-1\right)\right] \approx \frac{N a_{11}}{8}$. The expected number of edges between the two subpopulations is $\frac{a_{12}}{N} \times \left(\frac{N}{2}\right)^2 = \frac{N a_{12}}{4}$, justifying the double counting. For periodic switching networks, we further take the time average over one cycle of the total amount of interaction. The total amount of interaction for periodic switching networks is equal to $\sum_{\ell'=1}^{\ell} \tau_{\ell'}\sum_{i=1}^N \sum_{j=1}^N a_{ij}^{(\ell')} / T$.

For the periodic switching network used in Fig.~\ref{Floquet_multipliers_of_network}, we show in Fig.~\ref{amount_of_interaction} the leading eigenvalues (i.e., $\lambda_{\text{F}}$, 
$\lambda_{\max}(\textbf{M}^{(1)})$, and $\lambda_{\max}(\textbf{M}^{(2)})$) as a function of the total amount of interaction multiplied by the infection rate, $\beta$; this product represents the total infection potential in the population.  Figure~\ref{amount_of_interaction} indicates that the two static networks yield larger leading eigenvalues than the periodic switching network with the same total amount of interaction, supporting our proposal. 

\begin{figure}[t]
\centering
\includegraphics[scale=0.5]{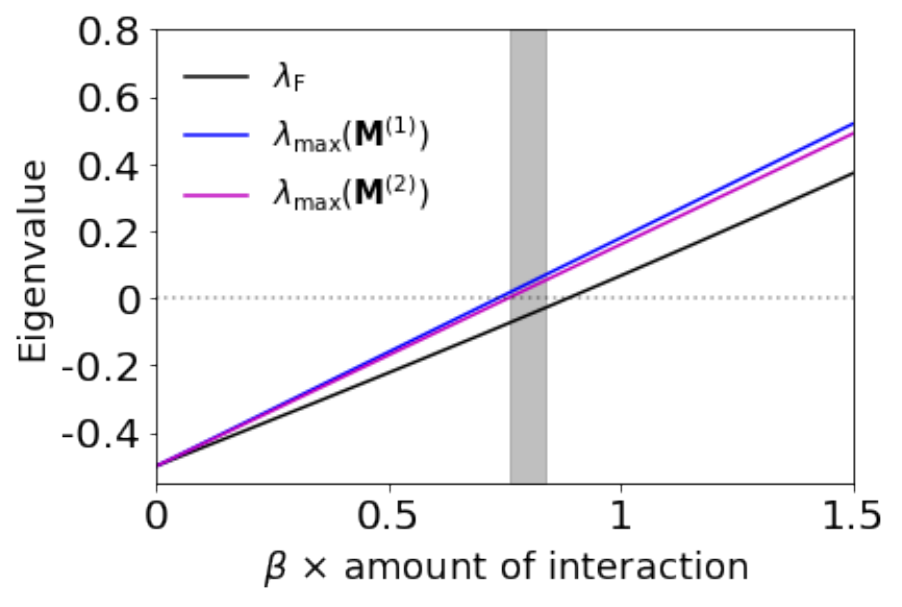}
\caption{Leading eigenvalues as a function of the total amount of interaction for a two-node periodic switching network and the constituent static networks. The value of $\mu$ $(=0.5)$, $\textbf{A}^{(1)}$, $\textbf{A}^{(2)}$, and the switching schedule (i.e., $\ell=2$, $T=7$, and $\tau_1 = 5$, and hence $\tau_2 = T - \tau_1 = 2$) are the same as those used in Fig.~\ref{Floquet_multipliers_of_network}. The shaded region represents the region of ``$\beta \times (\text{amount of interaction})$'' in which the Parrondo paradox occurs. }
\label{amount_of_interaction}  
\end{figure}

\subsection{Generality of the Parrondo paradox \label{sec:generality_paradox}} 

Next we examine the generality of the Parrondo paradox with respect to the connection strength between and within subpopulations, the number of subpopulations up to $N=5$, and larger networks of individuals (as opposed to networks of subpopulations) with distinct two subpopulations. First, we analyze some other two-node periodic switching networks. We use the same switching schedule (i.e., $\ell=2$ alternating static networks, period $T=7$, the duration of the first static network $\tau_1 = 5$, and hence $\tau_2 = T - \tau_1 = 2$) in the remainder of this paper including here. We consider three one-parameter families of two-node periodic switching networks and calculated the epidemic threshold for each periodic switching network. We show in Fig.~\ref{bifuraction_graph}(a)--(c) the epidemic threshold for the periodic switching network (i.e., $\beta_{\text{F}}^*$), first static network (i.e., $\beta_{1}^*$), and second static network (i.e., $\beta_{2}^*$). We set $\textbf{A}^{(1)}=\begin{pmatrix}\gamma&5\\5&0\end{pmatrix}$ and $\textbf{A}^{(2)}=\begin{pmatrix} 0&5\\5&15\end{pmatrix}$ in Fig.~\ref{bifuraction_graph}(a), $\textbf{A}^{(1)}=\begin{pmatrix}17&5\\5&\gamma\end{pmatrix}$ and $\textbf{A}^{(2)}=\begin{pmatrix}\gamma&5\\5&15\end{pmatrix}$ in Fig.~\ref{bifuraction_graph}(b), and $\textbf{A}^{(1)}\begin{pmatrix} 17&\gamma\\\gamma&0\end{pmatrix}$ and $\textbf{A}^{(2)}=\begin{pmatrix} 0&\gamma\\\gamma&15\end{pmatrix}$ in Fig.~\ref{bifuraction_graph}(c). The shaded regions in the figure are the ranges of $\gamma$ and $\beta$ for which the Parrondo paradox occurs, i.e., where $\lambda_{\text{F}}<0$, $\lambda_{\max}(\textbf{M}^{(1)} )> 0$, and $\lambda_{\max}(\textbf{M}^{(2)})>0$ are simultaneously satisfied. We find that the paradox occurs for some ranges of $\gamma$.

To verify that the Parrondo paradox can also be present in large networks, we similarly calculated the leading eigenvalues for networks with $N=2000$ nodes generated by an SBM that creates large networks with community structure as discussed in \ref{appendix:gillespie}. We define the SBMs to have two communities, each with $N/2 = 1000$ nodes, and obtain their edge probabilities using the entries in the adjacency matrices of the three families of $2\times 2$ periodic switching networks used in Fig.~\ref{bifuraction_graph}(a)--(c). Summing the entries across SBM blocks recovers the original $2\times 2$ matrices, in expectation.
We show in Fig.~\ref{bifuraction_graph}(d)--(f) the epidemic thresholds and the region of pairs of $\gamma$ and $\beta$ in which the Parrondo paradox is present for the networks with $N=2000$ nodes generated by the SBM. The results for the SBM are close to those for the corresponding two-node networks shown in Fig.~\ref{bifuraction_graph}(a)--(c).

\begin{figure}[t]
\begin{center}
\includegraphics[width=1.0\linewidth]{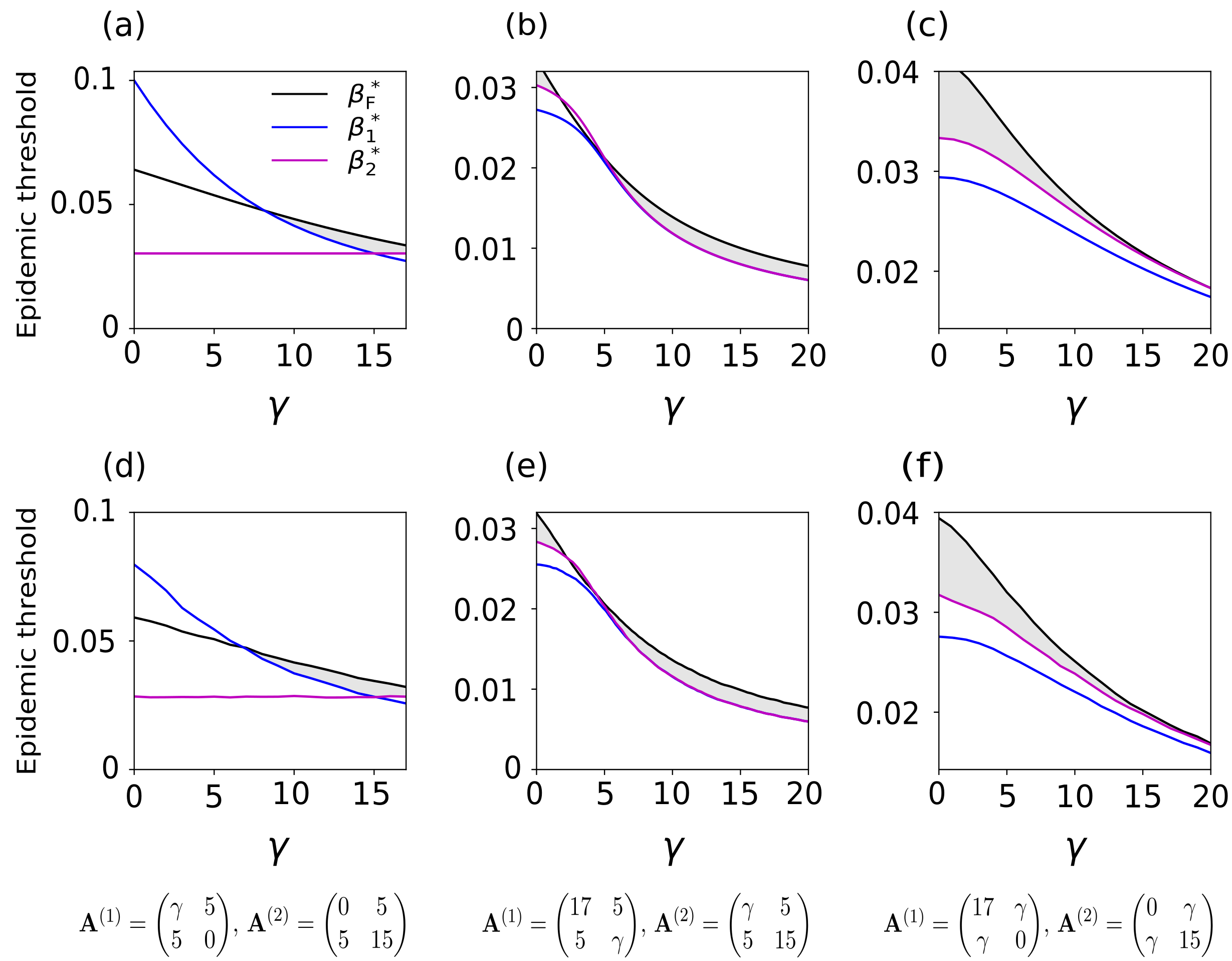}
\caption{Parrondo paradox over other periodically switching networks with two nodes. We show the epidemic threshold for the periodic switching network (i.e., $\beta_{\text{F}}^*$; black lines),
 $\textbf{A}^{(1)}$ (i.e., $\beta_{1}^*$; blue lines), and $\textbf{A}^{(2)}$ (i.e., $\beta_{2}^*$; magenta lines). The shaded region shows the values of $\gamma$ and $\beta$ for which the paradox occurs. (a)--(c): Two-node networks. (d)--(f): Networks with $N=2000$ nodes and two communities generated by the SBM. Panels (a) and (d) use the same adjacency matrices $\textbf{A}^{(1)}$ and $\textbf{A}^{(2)}$, indicated below (d). Same for (b) and (e), and for (c) and (f).} 
 \label{bifuraction_graph}
\end{center}
\end{figure}

Next we conduct an experiment with random networks to study the fraction of periodic switching networks for which the paradox occurs. To this end, we assume that the adjacency matrices $\textbf{A}^{(1)}$ and $\textbf{A}^{(2)}$ are symmetric and parameterize them as $\textbf{A}^{(1)}=\begin{pmatrix}\gamma_1&\gamma_2\\\gamma_2&\gamma_3\end{pmatrix}$ and $\textbf{A}^{(2)}=\begin{pmatrix} \gamma_4&\gamma_5\\\gamma_5&\gamma_6\end{pmatrix}$. Without loss of generality, we set $\gamma_1=1$; multiplying $\gamma_1$, $\ldots$, $\gamma_6$ by a factor of $c$ ($>0$) is equivalent to using the original values of $\gamma_1$, $\ldots$, $\gamma_6$ and changing $\beta$ to $c\beta$. Then, we independently draw $\gamma_2$, $\gamma_3$, $\gamma_4$, $\gamma_5$, and $\gamma_6$ from the uniform density on $[0, 5]$. For each pair of $\textbf{A}^{(1)}$ and  $\textbf{A}^{(2)}$, we computed $\beta_{\text{F}}^*$, $\beta_1^*$, and $\beta_2^*$  as the values of $\beta$ at which $\lambda_{\text{F}}$, $\lambda_{\max}(\textbf{M}^{(1)})$, or $\lambda_{\max}(\textbf{M}^{(2)})$ is equal to $0$, respectively.

We defined the Parrondo paradox as a situation in which $\lambda_{\text{F}} < 0$, $\lambda_{\max}(\textbf{M}^{(1)}) > 0$, and $\lambda_{\max}(\textbf{M}^{(2)}) > 0$ simultaneously hold true. Because these leading eigenvalues monotonically increase as $\beta$ increases, 
a periodic switching network shows the Parrondo paradox for some values of $\beta$ if and only if
$\beta_{\text{F}}^* > \max\{ \beta_1^*, \beta_2^* \}$, as shown by the shaded regions in Figs.~\ref{Floquet_multipliers_of_network} and \ref{bifuraction_graph}. We generated $10^4$ independent pairs of $\textbf{A}^{(1)}$ and $\textbf{A}^{(2)}$ to find that the paradox is present in 
28.7\% of the periodic switching networks. Therefore, we conclude that the Parrondo paradox is not a rare phenomenon, at least for $2\times 2$ adjacency matrices.  We did not find any periodic switching network that showed the opposite paradoxical behavior, i.e.,  $\beta_{\text{F}}^* < \min \{ \beta_1^*, \beta_2^* \}$.
 
As our final set of experiments to investigate the generality of the Parrondo paradox, we study how it is affected by the number of node. We first examine the possibility of the paradox with $3\times 3$ periodic switching networks. Let us consider arbitrarily selected one-parameter families of three-node periodic switching networks. The epidemic thresholds for the periodic switching network (i.e., $\beta_{\text{F}}^*$) and those for the static networks (i.e., $\beta_{1}^*$, $\beta_{2}^*$) are shown in Fig.~\ref{bifuraction_graph_3_by_3}(a) for $\textbf{A}^{(1)}=\begin{pmatrix}\gamma&5&5\\5&0&5\\ 5&5&15\end{pmatrix}$ and $\textbf{A}^{(2)}=\begin{pmatrix}14&5&5\\5&0&5\\ 5&5&\gamma\end{pmatrix}$ and Fig.~\ref{bifuraction_graph_3_by_3}(b) for $\textbf{A}^{(1)}=\begin{pmatrix}24&\gamma&5\\5&0&\gamma\\ \gamma&5&15\end{pmatrix}$ and $\textbf{A}^{(2)}=\begin{pmatrix}14&\gamma&5\\ 5&0&\gamma\\ \gamma&5&24\end{pmatrix}$. The shaded regions represent the region of $\gamma$ and $\beta$ values in which the Parrondo paradox occurs. We observe that the paradox occurs for a range of $\gamma$ in both families of periodic switching networks. For the periodic switching network shown in Fig.~\ref{bifuraction_graph_3_by_3}(a), the ranges of $\gamma$ for which the paradox occurs are disjoint; the paradox occurs for small and large $\gamma$ values, but not for intermediate $\gamma$ values.

\begin{figure}[t]
\begin{center}
\includegraphics[scale=0.4]{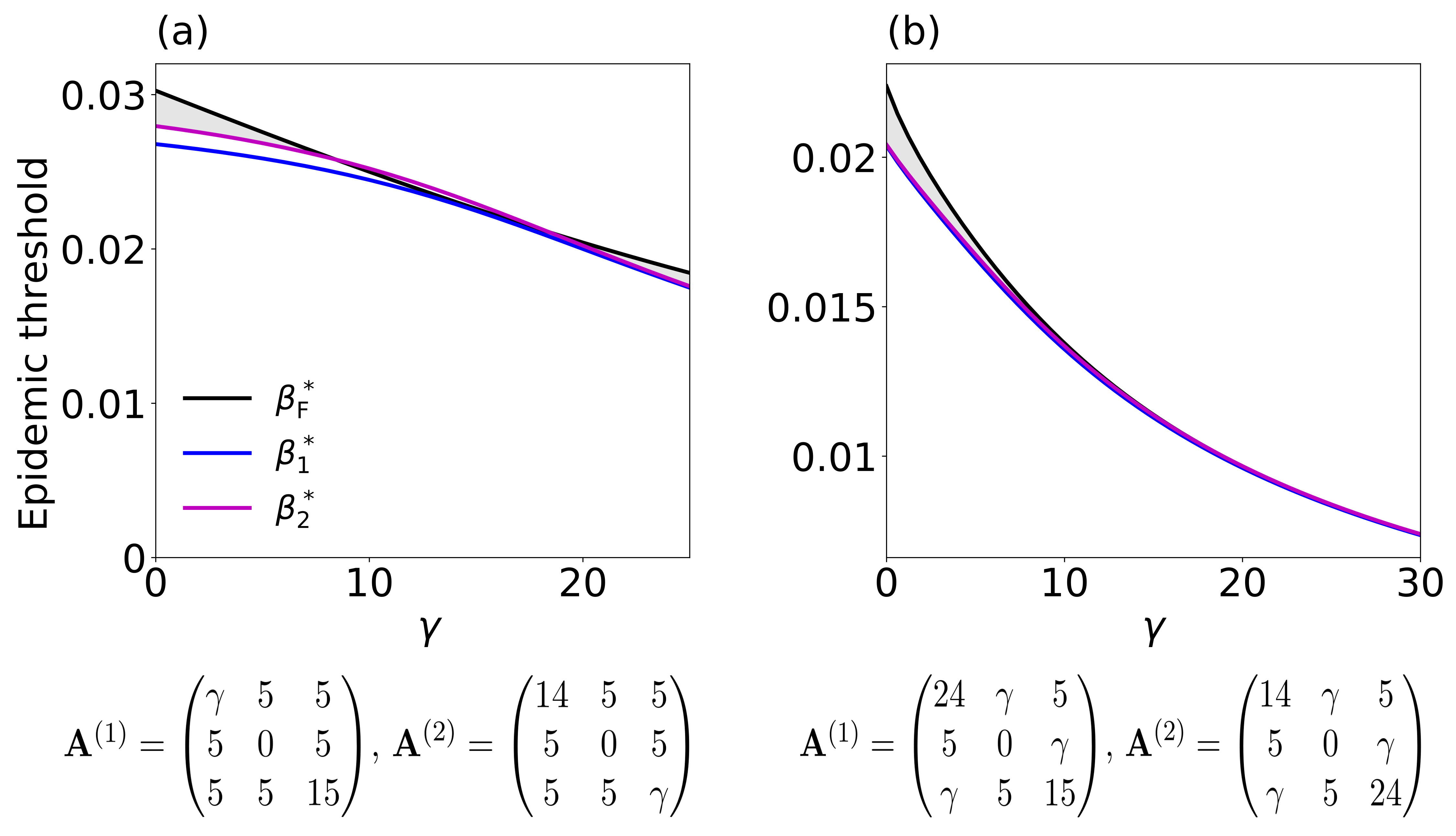}
\caption{Epidemic thresholds for periodic switching networks and static networks with three nodes. 
See the caption of Fig.~\ref{bifuraction_graph} for the legends.
The shaded regions represent the values of $\gamma$ and $\beta$ for which the Parrondo paradox occurs.
In (b), the blue line is not visible because it is almost completely hidden behind the magenta line.}
 \label{bifuraction_graph_3_by_3}
\end{center}
\end{figure}

To further assess the generality of the Parrondo paradox for periodic switching networks with three nodes,
we retain the same switching schedule for simplicity and generate symmetric adjacency matrices parameterized as $\textbf{A}^{(1)}=\begin{pmatrix}\gamma_1&\gamma_2&\gamma_3\\\gamma_2&\gamma_4&\gamma_5\\\gamma_3&\gamma_5&\gamma_6\end{pmatrix}$ and $\textbf{A}^{(2)}=\begin{pmatrix} \gamma_7&\gamma_8&\gamma_9\\\gamma_8&\gamma_{10}&\gamma_{11}\\\gamma_9&\gamma_{11}&\gamma_{12}\end{pmatrix}$. We set $\gamma_1=1$ without loss of generality. Then, we independently draw $\gamma_2$, $\ldots$, $\gamma_{12}$ from the uniform density on $[0, 5]$. We find that, among $10^4$ pairs of the generated $\textbf{A}^{(1)}$ and $\textbf{A}^{(2)}$, the paradox is present, with $\beta_{\text{F}}^* > \max\{ \beta_1^*, \beta_2^* \}$, in 12.0\% of the periodic switching networks, and that there is no case with $\beta_{\text{F}}^* < \min \{ \beta_1^*, \beta_2^* \}$.
 
We similarly analyzed randomly generated periodic switching networks with four and five nodes. We found that among $10^4$ pairs of randomly generated $\textbf{A}^{(1)}$ and $\textbf{A}^{(2)}$, the paradox is present in $5.21\%$ and $1.20\%$ of the periodic switching networks with four and five nodes, respectively. There was no periodic switching network showing $\beta_{\text{F}}^* < \min \{ \beta_1^*, \beta_2^* \}$. These results are qualitatively the same as those for periodic switching networks with fewer (i.e., two or three) nodes.
 
\subsection{Relationships between the Parrondo paradox and anti-phase oscillations\label{sec:decomposition}}

To better understand mechanisms of the Parrondo paradox, we carried out additional experiments motivated by the following observation.
Figure~\ref{Floquet_multipliers_of_network} suggests that the Parrondo paradox may be characterized by an anti-phase oscillation between $x_1(t)$ and $x_2(t)$. However, this association is not straightforward already with Fig.~\ref{Floquet_multipliers_of_network} because the figure shows anti-phase oscillations for a $\beta$ value at which the paradox occurs (see Fig.~\ref{Floquet_multipliers_of_network}(d)) as well as for a $\beta$ value at which it does not (see Fig.~\ref{Floquet_multipliers_of_network}(c)). Nevertheless, the anti-phase oscillation is absent for $\beta$ values that are far from where the paradox is present (see Fig.~\ref{Floquet_multipliers_of_network}(b) and \ref{Floquet_multipliers_of_network}(e)).

We measure the extent of anti-phase oscillation by the fraction of time, $q$, during one period in which $\text{d}x_1(t)/\text{d}t$ and $\text{d}x_2(t)/\text{d}t$ have the opposite sign. We measure $q$ in the fifth cycle, i.e., using $t \in [28, 35]$, and use the initial condition $(x_1(0), x_2(0)) = (0.1, 0.1)$. Consider the $10^4$ periodic switching networks with two communities used in section \ref{sec:generality_paradox}. We recall that $28.7\%$ of the periodic switching networks (i.e., 2,870 networks) show the Parrondo paradox. For each of these 2,870 networks, we select the middle value of the range of $\beta$ in which the paradox occurs and compute $q$ at the selected $\beta$ value. 
At this value of $\beta$, perfect anti-phase synchronization, i.e., $q=1$, occurs for 2,867 ($99.9\%$) of the 2,870 networks. For comparison, we have also computed $q$ for the rest of the networks (i.e., 7,130 out of the $10^4$ networks). We find that the maximum, minimum, mean and standard deviation of $q$ for these 7,130 networks is $0.974$, $0$, $0.366$, and $0.305$, respectively. 
Therefore, although some networks show strong anti-phase oscillation despite the lack of the Parrondo paradox behavior, most networks not showing the paradox have substantially smaller $q$ values than the networks showing the paradox.

We show examples of anti-phase dynamics in \ref{appendix:antiphase_example}.
It should be noted that $q=1$ holds true for both inside and outside the range of $\beta$ in which the paradox is present for some networks. In our examples, $q$ is much larger for the periodic switching networks showing the paradox than those not showing the paradox. We also observe some anti-phase behavior in the four periodic switching networks that do not show the Parrondo paradox for any value of $\beta$, but to a limited extent (i.e., $q \leq 0.7$).
These results are consistent with the population results with $10^4$ networks described above.
In sum, we conclude that anti-phase oscillation occurs more frequently in the presence or approximate presence of the Parrondo paradox than otherwise.

\subsection{Perturbation theory for the largest Floquet exponent} \label{sec:perturbation} 

Regardless of the presence or absence of the Parrondo paradox as one varies $\beta$, Fig.~\ref{Floquet_multipliers_of_network} suggests that the relationship between the largest Floquet exponent (i.e., $\lambda_{\text{F}}$) and $\beta$ is apparently close to linear for an extended range of $\beta$. Therefore, we derive the first-order approximation of $\lambda_{\text{F}}$ in terms of $\beta$ and assess its accuracy in this section.
\begin{thm}\label{perurbation_of_eigenvalue}(Perturbation of simple eigenvalues~\cite{atkinson1991introduction}).
Let  $\hat{\textbf{M}}$ be a symmetric $N\times N$ matrix with the left eigenvector $\{\textbf{u}^{(i)}\}$ and right eigenvector $\{\textbf{v}^{(i)}\}$ associated with eigenvalue $\lambda_i$, which is assumed to be simple (with $i \in \{1, \ldots, N\}$). Without loss of generality, the Euclidean norm of each eigenvector is equal to $1$. Consider a perturbation of $\hat{\textbf{M}}$ given by $\hat{\textbf{M}}(\beta)=\hat{\textbf{M}}+\beta \Delta \hat{\textbf{M}}$. The eigenvalues of $\hat{\textbf{M}}(\beta)$ when $\left| \beta \right| \ll 1$ satisfy
\begin{equation}
\lambda_i(\beta)=\lambda_i+\beta \lambda_i'(0)+\mathcal{O}(\beta^2),
\label{perturb_eigenvalue}
\end{equation}
where
\begin{equation}
\lambda_i'(0)= \frac{[\textbf{u}^{(i)}]^{\top} \Delta\hat{\textbf{M}}\textbf{v}^{(i)}}{[\textbf{u}^{(i)}]^{\top} \textbf{v}^{(i)}}.
\label{perturb_eigenvalue_explaination}
\end{equation}
\end{thm}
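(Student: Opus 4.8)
The plan is to treat this as a standard first-order eigenvalue perturbation argument, combining the analytic dependence of a simple eigenvalue on the perturbation parameter with a projection onto the associated left eigenvector. First I would invoke the fact that, because $\lambda_i$ is a simple eigenvalue of $\hat{\textbf{M}}$, it is an isolated simple root of the characteristic polynomial $\det(\hat{\textbf{M}}(\beta) - \lambda\textbf{I})$. The implicit function theorem (or, for symmetric analytic families, Rellich's theorem) then guarantees that for $|\beta| \ll 1$ there is a unique eigenvalue $\lambda_i(\beta)$ of $\hat{\textbf{M}}(\beta)$ near $\lambda_i$, together with a corresponding right eigenvector $\textbf{v}^{(i)}(\beta)$, both depending analytically (in particular, differentiably) on $\beta$, with $\lambda_i(0) = \lambda_i$ and $\textbf{v}^{(i)}(0) = \textbf{v}^{(i)}$.

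Next I would write the defining relation $\hat{\textbf{M}}(\beta)\textbf{v}^{(i)}(\beta) = \lambda_i(\beta)\textbf{v}^{(i)}(\beta)$, substitute $\hat{\textbf{M}}(\beta) = \hat{\textbf{M}} + \beta\Delta\hat{\textbf{M}}$, differentiate with respect to $\beta$, and evaluate at $\beta=0$. Writing $\dot{\textbf{v}}^{(i)}$ for the derivative of the eigenvector at $\beta=0$, this yields
\begin{equation}
\Delta\hat{\textbf{M}}\,\textbf{v}^{(i)} + \hat{\textbf{M}}\,\dot{\textbf{v}}^{(i)} = \lambda_i'(0)\,\textbf{v}^{(i)} + \lambda_i\,\dot{\textbf{v}}^{(i)}.
\end{equation}
The key step is then to left-multiply by the left eigenvector $[\textbf{u}^{(i)}]^{\top}$, which satisfies $[\textbf{u}^{(i)}]^{\top}\hat{\textbf{M}} = \lambda_i[\textbf{u}^{(i)}]^{\top}$. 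The two terms involving the unknown eigenvector derivative $\dot{\textbf{v}}^{(i)}$ cancel, since $[\textbf{u}^{(i)}]^{\top}\hat{\textbf{M}}\dot{\textbf{v}}^{(i)} = \lambda_i[\textbf{u}^{(i)}]^{\top}\dot{\textbf{v}}^{(i)}$, leaving
\begin{equation}
[\textbf{u}^{(i)}]^{\top}\Delta\hat{\textbf{M}}\,\textbf{v}^{(i)} = \lambda_i'(0)\,[\textbf{u}^{(i)}]^{\top}\textbf{v}^{(i)}.
\end{equation}
Solving for $\lambda_i'(0)$ gives Eq.~\eqref{perturb_eigenvalue_explaination}, and the first-order Taylor expansion of $\lambda_i(\beta)$ about $\beta=0$ with the analytic remainder produces Eq.~\eqref{perturb_eigenvalue}.

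I would also verify that the denominator $[\textbf{u}^{(i)}]^{\top}\textbf{v}^{(i)}$ is nonzero: for a simple eigenvalue this inner product is guaranteed to be nonzero by biorthogonality of the spectral decomposition, and in the symmetric case at hand the left and right eigenvectors coincide, so $[\textbf{u}^{(i)}]^{\top}\textbf{v}^{(i)} = \|\textbf{v}^{(i)}\|^2 = 1$ under the stated unit-norm normalization, and the denominator can even be dropped. The main obstacle, and the only genuinely nontrivial input, is the first step, namely justifying that a simple eigenvalue and its eigenvector vary smoothly with $\beta$; this is where simplicity is essential, since it prevents eigenvalue crossings and the attendant loss of differentiability. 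Once this smoothness is in hand, the remaining computation is the routine projection argument above.
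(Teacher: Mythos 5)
The paper does not actually prove this theorem: it is stated as a known result with only the citation to Atkinson, and no proof environment follows it, so there is no in-paper argument to compare against. Your proposal is the standard and correct derivation: differentiability of the simple eigenvalue via the implicit function theorem applied to the characteristic polynomial (or Rellich's theorem for the symmetric analytic family), differentiation of the eigenvalue equation, and projection onto the left eigenvector to cancel the unknown eigenvector derivative. The one step that deserves a touch more care is the existence of a differentiable eigenvector branch $\textbf{v}^{(i)}(\beta)$ --- the implicit function theorem on $\det(\hat{\textbf{M}}(\beta)-\lambda\textbf{I})$ by itself only gives the eigenvalue branch, and the eigenvector requires a separate (but standard) construction, e.g.\ applying the implicit function theorem to the augmented system $(\hat{\textbf{M}}(\beta)-\lambda\textbf{I})\textbf{v}=0$ with a fixed normalization, or invoking the analyticity of the Riesz spectral projection; your appeal to Rellich covers this. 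Your observation that in the symmetric case $\textbf{u}^{(i)}=\textbf{v}^{(i)}$ so the denominator is $1$ is also correct and worth noting, since the theorem as stated is a general nonsymmetric formula dressed with a symmetry hypothesis it does not really need.
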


\begin{lemma}[First-order approximation of matrix $\mathcal{T}$] The first-order perturbation of matrix $\mathcal{T}$ (defined in Eq.~\eqref{linearize_sismodel_conts_temporal_new}) is given by 
\begin{equation}
\begin{split}
\mathcal{T}(\beta)&= \mathcal{T}(0)+\beta \Delta \mathcal{T}(0)+\mathcal{O}(\beta^2)\\
 &=\left( \textbf{I}+\beta T \overline{\textbf{A}} \right) e^{-T\mu \textbf{I}}+\mathcal{O}(\beta^2).
\end{split}
\label{eq:description4.1}
\end{equation}
where
\begin{equation}
\overline{\textbf{A}} \equiv \frac{1}{T} \sum_{\ell'=1}^{\ell} \tau_{\ell'} \textbf{A}^{(\ell')}
\end{equation}
is the time-averaged adjacency matrix.
\end{lemma}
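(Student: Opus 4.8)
The plan is to work directly from the factored expression for $\mathcal{T}(\beta)$ recorded just after Eq.~\eqref{linearize_sismodel_conts_temporal_new}, namely
\begin{equation}
\mathcal{T}(\beta)= e^{\beta \textbf{A}^{(\ell)}\tau_{\ell}} \cdots e^{\beta \textbf{A}^{(1)}\tau_1}\, e^{-\mu T},
\end{equation}
which holds because each scalar factor $e^{-\mu \textbf{I}\tau_{\ell'}}=e^{-\mu\tau_{\ell'}}\textbf{I}$ commutes with every matrix, and the exponents sum so that the collected scalar prefactor is $e^{-\mu\sum_{\ell'}\tau_{\ell'}}=e^{-\mu T}=e^{-T\mu\textbf{I}}$. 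With this prefactor set aside, the task reduces to expanding the matrix product $\prod_{\ell'=\ell}^{1} e^{\beta \tau_{\ell'}\textbf{A}^{(\ell')}}$ to first order in $\beta$.

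First I would Taylor-expand each factor as $e^{\beta \tau_{\ell'}\textbf{A}^{(\ell')}}=\textbf{I}+\beta \tau_{\ell'}\textbf{A}^{(\ell')}+\mathcal{O}(\beta^2)$, valid since the matrices $\textbf{A}^{(\ell')}$ are fixed and $|\beta|\ll 1$. Substituting these into the product and multiplying out, the $\beta^0$ term is the product of the identity matrices, i.e.\ $\textbf{I}$. The $\beta^1$ term is obtained by selecting the linear contribution $\beta\tau_{\ell'}\textbf{A}^{(\ell')}$ from exactly one factor and the identity from all the others; summing over these choices gives $\beta\sum_{\ell'=1}^{\ell}\tau_{\ell'}\textbf{A}^{(\ell')}$, which by the definition $\overline{\textbf{A}}=T^{-1}\sum_{\ell'=1}^{\ell}\tau_{\ell'}\textbf{A}^{(\ell')}$ equals $\beta T\overline{\textbf{A}}$. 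Every remaining term contains a product of at least two linear factors and is therefore $\mathcal{O}(\beta^2)$. Reattaching the scalar prefactor yields $\mathcal{T}(\beta)=(\textbf{I}+\beta T\overline{\textbf{A}})\,e^{-T\mu\textbf{I}}+\mathcal{O}(\beta^2)$, which is exactly Eq.~\eqref{eq:description4.1} and identifies $\mathcal{T}(0)=e^{-T\mu\textbf{I}}$ and $\Delta\mathcal{T}(0)=T\overline{\textbf{A}}\,e^{-T\mu\textbf{I}}$.

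The one point requiring care --- and the only place the argument could look delicate --- is that the factors $e^{\beta\tau_{\ell'}\textbf{A}^{(\ell')}}$ generally do not commute, so one cannot naively merge the exponents, which would introduce Baker--Campbell--Hausdorff commutator corrections. The key observation that resolves this is that noncommutativity affects only terms of order $\beta^2$ and higher: at first order each contribution picks up the linear part of a single factor while all other factors act as the identity, so the ordering of the factors is immaterial and the first-order coefficient is genuinely $\sum_{\ell'}\tau_{\ell'}\textbf{A}^{(\ell')}$, independent of the switching order. I would state this explicitly to justify that the commutator corrections are safely absorbed into $\mathcal{O}(\beta^2)$ rather than contributing at first order; beyond this remark the computation is routine.
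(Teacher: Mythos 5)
Your proof is correct and takes essentially the same approach as the paper: the paper computes $\Delta\mathcal{T}(0)$ by applying the product rule to the factored form $e^{\beta\tau_\ell\textbf{A}^{(\ell)}}\cdots e^{\beta\tau_1\textbf{A}^{(1)}}e^{-\mu T\textbf{I}}$ and evaluating at $\beta=0$, which is the same computation as your term-by-term Taylor expansion of each factor. Your explicit remark that noncommutativity of the factors only affects terms of order $\beta^2$ and higher is a point the paper leaves implicit, but it does not change the substance of the argument.
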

\begin{proof} 
Using Eq.~\eqref{linearize_sismodel_conts_temporal_new}, we obtain
\begin{equation}
\begin{split}
\mathcal{T}(\beta)& =
%
%
%
%
e^{\beta \tau_{\ell} \textbf{A}^{(\ell)}}e^{\beta \tau_{\ell-1} \textbf{A}^{(\ell-1)}}\cdots e^{\beta \tau_{1}  \textbf{A}^{(1)}}e^{- \mu T \textbf{I}},
\end{split}
\label{eq:description4.2}
\end{equation}
which leads to
\begin{equation}
\mathcal{T}(0) = e^{-\mu T \textbf{I}}
\label{eq:description4.3}
\end{equation}
and
\begin{equation}
\begin{split}
\Delta \mathcal{T}(0) & =\left.\frac{\text{d}\mathcal{T}(\beta)}{\text{d}\beta}\right|_{\beta=0}\\
&= \tau_{\ell} \textbf{A}^{(\ell)} e^{\beta \tau_{\ell} \textbf{A}^{(\ell)}}e^{\beta \tau_{\ell-1} \textbf{A}^{(\ell-1)}}\cdots e^{\beta \tau_{1}  \textbf{A}^{(1)}}e^{- \mu T \textbf{I}} \\
& +e^{\beta \tau_{\ell} \textbf{A}^{(\ell)}} \tau_{\ell-1} \textbf{A}^{(\ell-1)}e^{\beta \tau_{\ell-1} \textbf{A}^{(\ell-1)}}\cdots e^{\beta \tau_{1}  \textbf{A}^{(1)}} e^{- \mu T \textbf{I}}+\cdots\\
&+e^{\beta \tau_{\ell} \textbf{A}^{(\ell)}} e^{\beta \tau_{\ell-1} \textbf{A}^{(\ell-1)}}\cdots \tau_1 \textbf{A}^{(1)} e^{\beta \tau_{1}  \textbf{A}^{(1)}}e^{- \mu T \textbf{I}}
\Bigg. \Bigg|_{\beta=0}\\
&= \left( \tau_{\ell} \textbf{A}^{(\ell)} + \tau_{\ell-1} \textbf{A}^{(\ell-1)}+\cdots+\tau_1 \textbf{A}^{(1)}\right)e^{-\mu T \textbf{I}} \\
&= T e^{- \mu T \textbf{I}} \overline{\textbf{A}}.
\end{split}
\label{eq:description4.5}
\end{equation}
Equations \eqref{eq:description4.3} and \eqref{eq:description4.5} imply \eqref{eq:description4.1}.
\end{proof}
\begin{lemma}[First-order approximation of the largest eigenvalue of $\mathcal{T}$] For $\left| \beta \right| \ll 1$, we obtain
\begin{equation}
\lambda_{\max}(\mathcal{T})(\beta) = e^{-T\mu}+ \beta \lambda_{\max}(\overline{\textbf{A}}) T e^{-T\mu} + \mathcal{O}(\beta^2).
\label{eq:description4.7}
\end{equation}
\end{lemma}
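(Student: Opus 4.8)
The plan is to read off the first-order expansion of $\lambda_{\max}(\mathcal{T})(\beta)$ from Lemma 1 together with the eigenvalue perturbation formula of Theorem~\ref{perurbation_of_eigenvalue}, taking the unperturbed matrix to be $\hat{\textbf{M}}=\mathcal{T}(0)=e^{-\mu T}\textbf{I}$ and the perturbation to be $\beta\,\Delta\mathcal{T}(0)=\beta\,T e^{-\mu T}\overline{\textbf{A}}$, both supplied by Eq.~\eqref{eq:description4.1}. Since we only need a first-order result and the true $\mathcal{T}(\beta)$ agrees with its linear truncation $\mathcal{T}(0)+\beta\,\Delta\mathcal{T}(0)$ up to $\mathcal{O}(\beta^2)$, the expansion of $\lambda_{\max}$ is determined entirely by $\mathcal{T}(0)$ and $\Delta\mathcal{T}(0)$. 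First I would record the zeroth-order term: every eigenvalue of $\mathcal{T}(0)=e^{-\mu T}\textbf{I}$ equals $e^{-\mu T}$, which is the leading term of Eq.~\eqref{eq:description4.7}.

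The delicate point, and the step I expect to be the main obstacle, is that Theorem~\ref{perurbation_of_eigenvalue} requires the perturbed eigenvalue to be \emph{simple}, whereas $e^{-\mu T}$ is an eigenvalue of $\mathcal{T}(0)$ of multiplicity $N$. I would resolve this by exploiting that $\mathcal{T}(0)$ is a scalar matrix: every vector is an eigenvector, so the zeroth-order eigenvectors that correctly diagonalize the perturbation within the (whole-space) eigenspace are precisely the eigenvectors of $\Delta\mathcal{T}(0)$, equivalently of $\overline{\textbf{A}}$. Because the time-averaged network is assumed strongly connected, $\overline{\textbf{A}}$ (and hence $\textbf{E}(T)=T\overline{\textbf{A}}$) is irreducible and nonnegative, so the Perron--Frobenius theorem yields a simple largest eigenvalue $\lambda_{\max}(\overline{\textbf{A}})$ with a positive eigenvector $\textbf{w}$, which I normalize to $\textbf{w}^{\top}\textbf{w}=1$. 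I would then track the eigenvalue branch of $\mathcal{T}(\beta)$ emanating from $\textbf{w}$ and argue it is the largest one: the eigenvalues of the linear truncation $e^{-\mu T}(\textbf{I}+\beta T\overline{\textbf{A}})$ are exactly $e^{-\mu T}\bigl(1+\beta T\lambda_i(\overline{\textbf{A}})\bigr)$, so for $\beta>0$ the maximum is attained at $\lambda_{\max}(\overline{\textbf{A}})$, consistent with the earlier-established Perron--Frobenius fact that $\lambda_{\max}(\mathcal{T}(\beta))$ is real, positive, and simple wherever $\mathcal{T}$ is a positive matrix.

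With the correct eigenvector identified, the remaining computation is routine. Because $\overline{\textbf{A}}$, and hence $\Delta\mathcal{T}(0)$, is symmetric, the left and right eigenvectors coincide, so I would set $\textbf{u}=\textbf{v}=\textbf{w}$ in Eq.~\eqref{perturb_eigenvalue_explaination} to obtain
\[
\lambda_{\max}'(0)=\frac{\textbf{w}^{\top}\Delta\mathcal{T}(0)\,\textbf{w}}{\textbf{w}^{\top}\textbf{w}}=T e^{-\mu T}\,\frac{\textbf{w}^{\top}\overline{\textbf{A}}\,\textbf{w}}{\textbf{w}^{\top}\textbf{w}}=T e^{-\mu T}\lambda_{\max}(\overline{\textbf{A}}),
\]
where the last equality uses $\overline{\textbf{A}}\,\textbf{w}=\lambda_{\max}(\overline{\textbf{A}})\,\textbf{w}$ and $\textbf{w}^{\top}\textbf{w}=1$. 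Substituting this slope and the zeroth-order value $e^{-\mu T}$ into Eq.~\eqref{perturb_eigenvalue} gives Eq.~\eqref{eq:description4.7}. Finally I would note that passing from the linear truncation to the true $\mathcal{T}(\beta)$ perturbs $\lambda_{\max}$ only at order $\beta^2$, since the two matrices differ by $\mathcal{O}(\beta^2)$ and the branch is simple for $\beta>0$; hence the stated first-order expansion is unaffected.
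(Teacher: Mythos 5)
Your proof is correct and the core computation is the same as the paper's: apply the eigenvalue perturbation formula of Theorem~\ref{perurbation_of_eigenvalue} to the expansion $\mathcal{T}(\beta)=e^{-\mu T}\textbf{I}+\beta\,T e^{-\mu T}\overline{\textbf{A}}+\mathcal{O}(\beta^2)$, evaluate the Rayleigh-quotient term using the dominant eigenvector(s) of $\overline{\textbf{A}}$, and read off the slope $T e^{-\mu T}\lambda_{\max}(\overline{\textbf{A}})$. The genuine difference is that you explicitly confront the fact that $e^{-\mu T}$ is an $N$-fold degenerate eigenvalue of $\mathcal{T}(0)$, so Theorem~\ref{perurbation_of_eigenvalue} (stated for simple eigenvalues) does not apply verbatim; the paper's proof silently substitutes the Perron eigenvectors of $\overline{\textbf{A}}$ into Eq.~\eqref{perturb_eigenvalue_explaination} without comment. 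Your resolution --- that $\mathcal{T}(0)$ is a scalar matrix, so the correct zeroth-order eigenvectors are exactly those of the perturbation $\overline{\textbf{A}}$, whose Perron eigenvalue is simple by irreducibility, and the eigenvalues of the truncation are exactly $e^{-\mu T}\bigl(1+\beta T\lambda_i(\overline{\textbf{A}})\bigr)$ --- is the right way to justify the step and makes the argument more rigorous than the paper's. One small caveat: you invoke symmetry of $\overline{\textbf{A}}$ to set $\textbf{u}=\textbf{v}$, which is consistent with the hypothesis of Theorem~\ref{perurbation_of_eigenvalue} as stated but slightly narrower than the paper's usage (some of the paper's examples use asymmetric $\textbf{A}^{(\ell')}$); keeping distinct left and right Perron eigenvectors, as the paper does, gives the same conclusion without that assumption.
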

\begin{proof} Equation~\eqref{eq:description4.3} implies that
\begin{equation}
\lambda_{\max}(\mathcal{T})(0)=e^{-T\mu}.
\label{eq:description4.10}
\end{equation}
Let $\textbf{u}$ and $\textbf{v}$ be the dominant left and right eigenvectors of $\overline{\textbf{A}}$, respectively. By combining Eqs.~\eqref{perturb_eigenvalue_explaination} and \eqref{eq:description4.5}, we obtain
\begin{equation}
\begin{split}
\lambda'_{\max}(\mathcal{T})(0)
&= \frac{\textbf{u}^{\top} (T e^{-T\mu} \overline{\textbf{A}}) \textbf{v} }{\textbf{u}^{\top}\textbf{v}}\\
&= \lambda_{\max}(\overline{\textbf{A}}) T e^{-T\mu}.
\end{split}
\label{eq:description4.11}
\end{equation}
By substituting Eqs.~\eqref{eq:description4.10} and \eqref{eq:description4.11} in Eq.~\eqref{perturb_eigenvalue},
we obtain Eq.~\eqref{eq:description4.7}.
\end{proof}

\begin{lemma}[First-order approximation of $\lambda_{\text{F}}$] For $\left| \beta \right| \ll 1$, we obtain
\begin{equation}
\lambda_{\text{F}}(\beta)=\lambda_{\text{F}}(0)+\beta \lambda'_{\text{F}}(0)+\mathcal{O}(\beta^2)=-\mu+\beta \lambda_{\max}(\overline{\textbf{A}})+\mathcal{O}(\beta^2).   
\label{eq:lambda_F-expand}
\end{equation}
\end{lemma}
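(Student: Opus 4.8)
The plan is to start from the definition $\lambda_{\text{F}} \equiv T^{-1}\ln\lambda_{\max}(\mathcal{T})$ and insert the first-order expansion of $\lambda_{\max}(\mathcal{T})(\beta)$ established in the preceding lemma, then expand the resulting logarithm to first order in $\beta$.

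First I would rewrite Eq.~\eqref{eq:description4.7} by pulling out the common factor $e^{-T\mu}$, namely
\[
\lambda_{\max}(\mathcal{T})(\beta) = e^{-T\mu}\left[ 1 + \beta\,\lambda_{\max}(\overline{\textbf{A}})\,T + \mathcal{O}(\beta^2)\right].
\]
Since the Perron--Frobenius argument in Section~\ref{sec:SIS_temporal} guarantees that $\lambda_{\max}(\mathcal{T})$ is real and positive, its logarithm is well defined, so that
\[
\lambda_{\text{F}}(\beta) = \frac{1}{T}\ln\lambda_{\max}(\mathcal{T})(\beta) = \frac{1}{T}\left[ -T\mu + \ln\!\left(1 + \beta\,\lambda_{\max}(\overline{\textbf{A}})\,T + \mathcal{O}(\beta^2)\right)\right].
\]

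Next I would apply the scalar expansion $\ln(1+u)=u+\mathcal{O}(u^2)$ with $u = \beta\,\lambda_{\max}(\overline{\textbf{A}})\,T + \mathcal{O}(\beta^2)$. Because $u=\mathcal{O}(\beta)$, the remainder $\mathcal{O}(u^2)$ is $\mathcal{O}(\beta^2)$ and is absorbed into the error term, leaving
\[
\lambda_{\text{F}}(\beta) = -\mu + \beta\,\lambda_{\max}(\overline{\textbf{A}}) + \mathcal{O}(\beta^2),
\]
which is exactly Eq.~\eqref{eq:lambda_F-expand}; reading off the coefficients yields $\lambda_{\text{F}}(0)=-\mu$ and $\lambda'_{\text{F}}(0)=\lambda_{\max}(\overline{\textbf{A}})$.

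The derivation is essentially routine, being a single application of the logarithm's Taylor expansion to the result of the previous lemma. The only point that requires care --- and the one I would flag as the main obstacle --- is the bookkeeping of the error terms through the logarithm: I must confirm that the $\mathcal{O}(\beta^2)$ term already present inside the bracket of Eq.~\eqref{eq:description4.7} cannot corrupt the first-order coefficient. This holds precisely because $\ln(1+u)$ agrees with $u$ up to $\mathcal{O}(u^2)$ while $u$ itself is $\mathcal{O}(\beta)$, so no quadratic-or-higher contribution can leak into the linear term.
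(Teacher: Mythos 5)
Your proposal is correct and follows essentially the same route as the paper: both start from $\lambda_{\text{F}}=T^{-1}\ln\lambda_{\max}(\mathcal{T})$ and perform a first-order expansion of the logarithm using the preceding lemma's expansion of $\lambda_{\max}(\mathcal{T})(\beta)$. The only cosmetic difference is that the paper evaluates $\lambda'_{\text{F}}(0)$ via the chain rule at $\beta=0$ whereas you expand $\ln(1+u)$ explicitly; the error bookkeeping you flag is handled correctly.
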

\begin{proof} By definition, we obtain $\lambda_{\text{F}}(\beta)=\frac{\ln \lambda_{\max}(\mathcal{T})(\beta)}{T}$. Using this, Eq.~\eqref{eq:description4.10}, and Eq.~\eqref{eq:description4.11}, we obtain
\begin{equation}
\lambda_{\text{F}}(0)=\frac{\ln \lambda_{\max}(\mathcal{T})(0)}{T} = -\mu
\label{eq:lambda_F(0)}
\end{equation}
and 
\begin{equation}
\lambda'_{\text{F}}(0)=\left.\frac{d\lambda_{\text{F}}}{d\beta}\right|_{\beta=0}
= \left.\frac{\partial \lambda_{\text{F }}}{\partial \lambda_{\max}(\mathcal{T})}\frac{\partial \lambda_{\max}(\mathcal{T})}{\partial \beta}\right|_{\beta=0}
= \frac{1}{T \lambda_{\max}(\mathcal{T})(0)} \cdot  \lambda_{\max}(\overline{\textbf{A}}) T e^{-T\mu}
= \lambda_{\max}(\overline{\textbf{A}}).
\label{eq:description4.15}
\end{equation}
Equations~\eqref{eq:lambda_F(0)} and \eqref{eq:description4.15} imply Eq.~\eqref{eq:lambda_F-expand}.
\end{proof}
Equation~\eqref{eq:lambda_F-expand} implies the following corollary.
\begin{corollary}
The epidemic threshold for the periodic switching network under the first-order approximation is given by
\begin{equation}
\beta_{\text{F}}^* = \frac{\mu}{\lambda_{\max}(\overline{\textbf{A}})}.
\label{eq:lambda_F-corollary}
\end{equation}
\end{corollary}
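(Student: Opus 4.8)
The plan is to invoke the first-order expansion of the largest Floquet exponent established in the preceding lemma and combine it with the definition of the epidemic threshold. Recall from Section~\ref{sec:SIS_temporal} that, under the IBA, the epidemic grows on a periodic switching network if and only if $\lambda_{\text{F}} > 0$, so the threshold $\beta_{\text{F}}^*$ is by definition the value of $\beta$ at which $\lambda_{\text{F}} = 0$ (equivalently, the value at which $\lambda_{\max}(\mathcal{T}) = 1$). The corollary therefore amounts to locating the root of $\lambda_{\text{F}}(\beta)$ within the first-order approximation.

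First I would take the first-order approximation of $\lambda_{\text{F}}(\beta)$ from Eq.~\eqref{eq:lambda_F-expand}, namely $\lambda_{\text{F}}(\beta) = -\mu + \beta \lambda_{\max}(\overline{\textbf{A}}) + \mathcal{O}(\beta^2)$, and discard the $\mathcal{O}(\beta^2)$ remainder to obtain the linear approximation $\lambda_{\text{F}}(\beta) \approx -\mu + \beta \lambda_{\max}(\overline{\textbf{A}})$. Setting this linear expression equal to zero and solving for $\beta$ yields $\beta_{\text{F}}^* = \mu / \lambda_{\max}(\overline{\textbf{A}})$, which is the claimed formula. Since $\overline{\textbf{A}}$ is a nonnegative and, under the standing irreducibility (strong connectivity) assumption, irreducible matrix, the Perron--Frobenius theorem guarantees $\lambda_{\max}(\overline{\textbf{A}}) > 0$, so the quotient is well defined and positive.

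There is essentially no obstacle here: the statement is an immediate consequence of the linear-in-$\beta$ form of the leading-order Floquet exponent, and the only subtlety is conceptual rather than computational. Specifically, $\beta_{\text{F}}^*$ is defined implicitly through $\lambda_{\text{F}}(\beta_{\text{F}}^*) = 0$, and I am reporting the root of the first-order truncation rather than of the exact nonlinear relation; the exact threshold differs from this value by a correction of order $\mathcal{O}((\beta_{\text{F}}^*)^2)$, so the formula is accurate precisely in the regime $\lvert \beta \rvert \ll 1$ where Eq.~\eqref{eq:lambda_F-expand} is valid. It is worth remarking that this approximate threshold coincides with the exact epidemic threshold $\mu / \lambda_{\max}(\overline{\textbf{A}})$ of the static network whose adjacency matrix is the time-average $\overline{\textbf{A}}$, recovering the intuition that, to leading order in $\beta$, the periodic switching network behaves like its time-averaged counterpart and thus cannot exhibit the Parrondo's paradox until higher-order corrections are taken into account.
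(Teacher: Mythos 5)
Your derivation is correct and is exactly the argument the paper intends: the corollary follows immediately from setting the first-order truncation $\lambda_{\text{F}}(\beta)\approx-\mu+\beta\,\lambda_{\max}(\overline{\textbf{A}})$ to zero, and the paper itself offers no further proof beyond stating that Eq.~\eqref{eq:lambda_F-expand} implies it. One caveat on your closing aside: the first-order approximation does \emph{not} preclude the Parrondo's paradox, since $\lambda_{\max}(\overline{\textbf{A}})$ can be strictly smaller than both $\lambda_{\max}(\textbf{A}^{(1)})$ and $\lambda_{\max}(\textbf{A}^{(2)})$, making $\beta_{\text{F}}^*$ exceed both static thresholds --- indeed the paper reports this for $31.7\%$ of its randomly generated pairs under exactly this approximation.
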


Now we validate our first-order approximation for $2\times 2$ periodic switching networks with $\ell=2$. We 
compare $\lambda_{\text{F}}$ between the exact value and linear approximation in Fig.~\ref{perturbation} for eight arbitrarily selected two-node periodic switching networks.
Figures~\ref{perturbation}(a)--(d) are for four networks showing the Parrondo paradox. We find that the first-order approximation is sufficiently accurate for a range of $\beta$ including the range in which the paradox occurs. The first-order approximation is also accurate for four periodic switching networks in which the paradox is absent in the entire range of $\beta$ (see Fig.~\ref{perturbation}(e)--(h)). Figure~\ref{perturbation} suggests that the first-order approximation tends to be more accurate when the paradox is absent (see Fig.~\ref{perturbation}(e)--(h)) than present (see Fig.~\ref{perturbation}(a)--(d)).

\begin{figure}[t]
\begin{center}
\includegraphics[width=\linewidth]{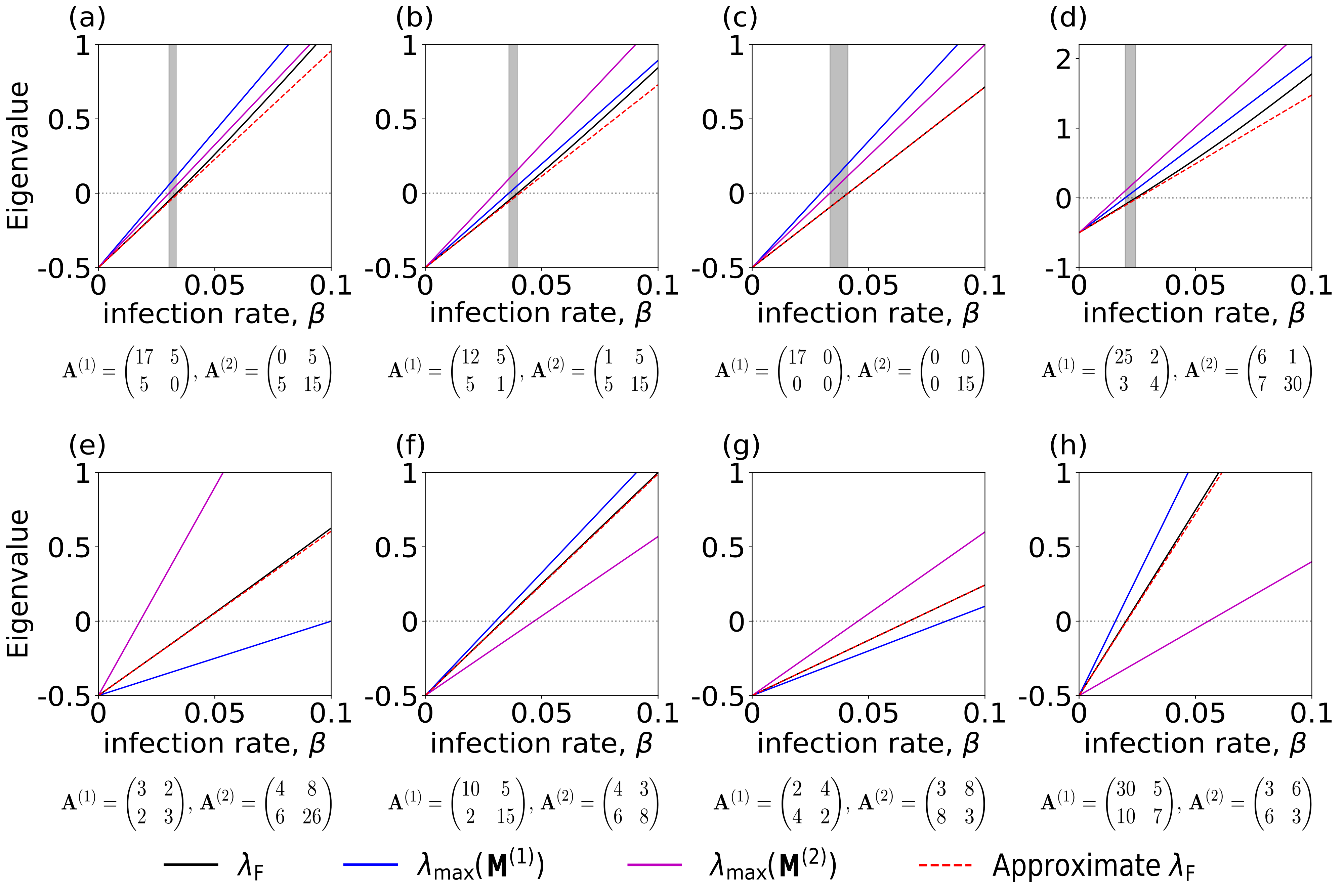}
\caption{
First-order approximation to the largest Floquet exponent for eight two-node periodic switching networks with $k=2$, $T=7$, and $\tau_1 = 5$. We set $\mu=0.5$.
We show the largest Floquet exponent (i.e., $\lambda_ F$) in black, largest eigenvalue of $\textbf{M}^{(1)}$ in blue, that of $\textbf{M}^{(2)}$ in magenta, and the first-order approximation of $\lambda_ F$ in red.
}
 \label{perturbation}
\end{center}
\end{figure}

As an application of the first-order approximation of $\lambda_{\text{F}}$, we examine the fraction of randomly generated periodic switching networks for which the Parrondo paradox occurs, as we did in section~\ref{sec:generality_paradox}. 
Note that the epidemic threshold is given in the form of $\mu/\lambda_{\max}$ 
for the periodic switching network under the first-order approximation as well as for the static networks. Therefore, if the first-order approximation is accurate, the Parrondo paradox happens if 
$\lambda_{\max}(\overline{\textbf{A}})$ is larger than both
$\lambda_{\max}(\textbf{A}^{(1)})$ and $\lambda_{\max}(\textbf{A}^{(2)})$ or smaller than both of them.
In the same manner as that in section~\ref{sec:generality_paradox},
we generated $10^4$ pairs of $\textbf{A}^{(1)}$ and $\textbf{A}^{(2)}$ and computed $\lambda_{\max}(\overline{\textbf{A}})$, $\lambda_{\max}(\textbf{A}^{(1)})$, and $\lambda_{\max}(\textbf{A}^{(2)})$.
We found that $\lambda_{\max}(\overline{\textbf{A}})$ is smaller than both
$\lambda_{\max}(\textbf{A}^{(1)})$ and $\lambda_{\max}(\textbf{A}^{(2)})$ for $31.7\%$ pairs of 
$\textbf{A}^{(1)}$ and $\textbf{A}^{(2)}$.
There was no case in which $\lambda_{\max}(\overline{\textbf{A}})$ is larger than $\lambda_{\max}(\textbf{A}^{(1)})$ and $\lambda_{\max}(\textbf{A}^{(2)})$. These results are similar to those obtained for
$\lambda_{\text{F}}$ (see section~\ref{sec:generality_paradox}).
 
\section{Discussion}\label{sec:discuss}

We showed a Parrondo paradox in the SIS dynamics on periodically switching temporal networks in which an
epidemic decays over a long time while the dynamics is super-critical (i.e., above the epidemic threshold) for the momentary static network at any point of time. We primarily showed the paradox in simple cases involving a periodic alternation of two $2\times 2$ adjacency matrices. Larger networks (with $N=2000$ nodes) with two subpopulations generated by the SBM also showed the paradox, even for a single run of stochastic SIS dynamics.  We further verified that the paradox occurred with adjacency matrices of larger sizes, up to five nodes. However, we found that the fraction of random networks that show the Parrondo paradox decreased as the number of nodes increases. This is an important limitation of the present study. Further investigating the generality of the Parrondo paradox in terms of the number of nodes or subpopulations, the number of static networks constituting a periodic switching network, different epidemic process models, and different network models such as the metapopulation model and hypergraphs, warrants future work.

We pointed out the relationship between the growth rate of the epidemics (including the epidemic threshold, at which the growth rate is equal to $0$) and the Floquet exponent. Leveraging Floquet theory to characterize and intervene into epidemic processes on periodic temporal networks may be fruitful. For example, 
the periodic matrix obtained by the Floquet decomposition (i.e., matrix $\textbf{P}(t)$ in \ref{sec:flo})
may help us to quantify the amplitude of antiphase oscillation between $x_i(t)$ and $x_j(t)$ (with $j \neq i$) or change it through manipulation of $\textbf{P}(t)$.
 
In the context of COVID-19, effects on mitigating spread of infection were numerically simulated on physical proximity and face-to-face contact data and compared between ``rotating" and ``on-off" strategies with two-day or biweekly periodic schedules \cite{mauras2021mitigating}. The rotating strategy by definition imposes that half the population is allowed to go out in odd days or odd weeks, and the other half is allowed to go out in even days or even weeks. In contrast, the on-off strategy allows the entire population to go for work in one day (or one week) and not on the next day (or week), and such a two-day or biweekly pattern is repeated. As expected, any of these four strategies helped towards decreasing the basic reproduction number below $1$. 
Other studies also investigated effects of different on-off strategies on suppressing epidemics
\cite{karincyclic, kaygusuz2021effect}. Additionally, it was shown in Ref.~\cite{mauras2021mitigating} that
among the four strategies, the best and worst strategies were the rotating strategy with the biweekly cycle and the on-off strategy with two-day cycle, respectively. Superiority of a rotating strategy to an on-off strategy was also shown in a different study~\cite{meidan2021alternating}. These comparative results suggest that different quarantine or curfew strategies based on periodic switching networks are feasible and that their efficiency depends on the periodic switching network. Investigating the possibility of the Parrondo paradox and its possible roles in suppressing population-wide infection in these and other practical disease spreading models warrants future study. In particular, both these previous studies and the present work suggest that eliciting anti-phase oscillations between different communities might be a key to suppressing infection while keeping the overall activity level of individuals relatively high.

 \appendix
\renewcommand{\thesection}{{Appendix }\Alph{section}}

\section {Proof that all the entries of $e^{\beta \textbf{E}(T)}$ are positive}\label{lemma_proof}

Here, we show that each entry of the matrix exponential of the time-aggregated adjacency matrix $ \textbf{E}(T)=\int_0^T \textbf{A}(t)\text{d}t$ is positive under the assumption that $\textbf{E}(T)$ is nonnegative and irreducible (i.e., which corresponds to when it is associated with a strongly connected graph with positive edge weights).
Recall that the matrix exponential is defined by $e^{\beta \textbf{E}(T)}=\sum_{k=0}^\infty \frac{1}{k!} [\beta\textbf{E}(T)]^k$, which has entries
\begin{equation}
[e^{\beta \textbf{E}(T)}]_{ij}= \sum_{k=0}^\infty \frac{1}{k!}   [\beta\textbf{E}(T)]_{ij}^k
= \sum_{k=0}^\infty \frac{1}{k!}   \beta^k[\textbf{E}(T)^k]_{ij}.
\label{matrix_exponential_equation}
\end{equation}
Because $\textbf{E}(T)$ is nonnegative and irreducible, for any $(i, j) \in \{1, \ldots, N\}\times \{1, \ldots, N\}$, there exists a $k$ such that   $\frac{1}{k!}  \beta^k[{\bf E}(T)^k ]_{ij}>0$. Specifically, $\frac{1}{k!}  \beta^k[{\bf E}(T)^k ]_{ij}$ is strictly positive if there is at least one path from $i$ to $j$ having length $k$. Notably, $[{\bf E}(T)^k ]_{ij}$ equals the product of edge weights along such a path, which is summed across all possible paths. Let $\overline{k}$ denote such a $k$ value. Then, we note that $\frac{1}{k!} \beta^k [{\bf E} (T)^k ]_{ij}$ must be nonnegative for any $k \neq \overline{k}$, since ${\bf E}(T)$ has nonnegative entries and the product and summation of nonnegative numbers remain nonnegative. Thus, for any $i,j$, the right-hand side of Eq.~\eqref{matrix_exponential_equation} is a positive number (i.e., the term with $k=\overline{k}$) plus the sum of nonnegative numbers (i.e., the terms with $k \neq \overline{k}$), which is a positive number.


 \section{Floquet theory}\label{sec:flo}
%
\renewcommand{\thedefinition}{\Alph{section}.\arabic{definition}}
\renewcommand{\thethm}{\Alph{section}.\arabic{thm}}
\renewcommand{\theremark}{\Alph{section}.\arabic{remark}}

In this section, we give a brief review of Floquet theory for periodic linear dynamical systems in continuous time.

\begin{definition}[Periodic system] Consider a non-autonomous continuous-time dynamical system given by
\begin{equation}
x'(t)=M(t,x(t)).
\label{periodic_equation}
\end{equation}
Dynamical system~\eqref{periodic_equation} is called a $T$-periodic system with period $T$ $(>0)$ if $M(t+T,x)=M(t,x)$ $\forall t$.
\end{definition}

We present Floquet theory for linear $T$-periodic dynamical systems in continuous time given by
\begin{equation}
   \frac{\text{d}\textbf{x}(t)}{\text{d}t}=\textbf{M}(t)\textbf{x}(t),
\label{linear_ode} 
\end{equation}
where $\textbf{M}(t)\in \mathcal{R}^{N\times N}$  is a $T$-periodic matrix, and $\textbf{x}(t)\in \mathcal{R}^{N}$ represents the system's state. Let $\textbf{X}(t)\in \mathcal{R}^{N\times N}$ be the fundamental matrix of system~\eqref{linear_ode}. By definition of the fundamental matrix, the columns of the fundamental matrix are linearly independent solutions of~Eq.~\eqref{linear_ode}.

\begin{thm}\label{floquet_cont}[Floquet's theorem for continuous-time linear periodic systems \cite{floquet1883equations}] 
For system~\eqref{linear_ode} that has a fundamental matrix, there exists a nonsingular $T$-periodic matrix,  $\textbf{P}(t)\in \mathbb{R}^{N\times N}$, and a constant matrix,  $\textbf{B}\in \mathbb{C}^{N\times N}$, such that 
\begin{equation}
   \textbf{X}(t)=\textbf{P}(t)e^{\textbf{B}t}.
\label{floquet_decomposition} 
\end{equation}
Equation~\eqref{floquet_decomposition} is known as the Floquet normal form for $\textbf{X}(t)$.  

\end{thm}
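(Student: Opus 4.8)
The plan is to run the classical argument organized around the \emph{monodromy matrix}. First I would observe that, because $\textbf{M}(t)$ is $T$-periodic, the shifted matrix $\textbf{X}(t+T)$ is again a fundamental matrix of Eq.~\eqref{linear_ode}: differentiating and using $\textbf{M}(t+T)=\textbf{M}(t)$ shows that $\textbf{X}(t+T)$ solves the same linear system, and its columns stay linearly independent because $\textbf{X}(t)$ is nonsingular for every $t$. Since any two fundamental matrices of a linear system differ by a constant nonsingular right factor, there is a constant nonsingular matrix $\textbf{C}$ (the monodromy matrix) with
\begin{equation}
\textbf{X}(t+T)=\textbf{X}(t)\textbf{C}.
\end{equation}
Concretely one may take $\textbf{C}=\textbf{X}(0)^{-1}\textbf{X}(T)$; that $\textbf{C}$ is constant follows because $\textbf{X}(t)^{-1}\textbf{X}(t+T)$ has vanishing derivative.

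The second step is to exponentiate away the monodromy. Because $\textbf{C}$ is nonsingular, it admits a (generally complex) logarithm, i.e.\ there is $\textbf{B}\in\mathbb{C}^{N\times N}$ with $e^{\textbf{B}T}=\textbf{C}$. I would then set $\textbf{P}(t)\equiv\textbf{X}(t)e^{-\textbf{B}t}$ and check the two required properties. Nonsingularity of $\textbf{P}(t)$ is immediate, since both $\textbf{X}(t)$ and $e^{-\textbf{B}t}$ are invertible. For $T$-periodicity I would compute
\begin{equation}
\textbf{P}(t+T)=\textbf{X}(t+T)e^{-\textbf{B}(t+T)}=\textbf{X}(t)\textbf{C}e^{-\textbf{B}T}e^{-\textbf{B}t}=\textbf{X}(t)e^{-\textbf{B}t}=\textbf{P}(t),
\end{equation}
using $\textbf{C}e^{-\textbf{B}T}=e^{\textbf{B}T}e^{-\textbf{B}T}=\textbf{I}$ (the two factors commute as they are powers of a single matrix). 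Rearranging $\textbf{P}(t)=\textbf{X}(t)e^{-\textbf{B}t}$ then yields the Floquet normal form $\textbf{X}(t)=\textbf{P}(t)e^{\textbf{B}t}$.

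The main obstacle is the existence of the logarithm and, relatedly, the reality of $\textbf{P}(t)$. Existence of $\textbf{B}$ with $e^{\textbf{B}T}=\textbf{C}$ holds over $\mathbb{C}$ for every nonsingular $\textbf{C}$ (e.g.\ via the Jordan form or the holomorphic functional calculus), which is exactly why the statement permits $\textbf{B}$ to be complex. A real monodromy matrix need not, however, possess a \emph{real} logarithm, so the $\textbf{P}(t)$ constructed above is in general complex; producing a genuinely real periodic factor requires either doubling the period to $2T$ (since $\textbf{C}^2$ always has a real logarithm) or grouping complex-conjugate factors. I would flag this subtlety and note that, for the uses in the main text—where only the Floquet multipliers (eigenvalues of $\textbf{C}$, equivalently of $\mathcal{T}$) and the exponents encoded by $\textbf{B}$ enter—the complex normal form already suffices.
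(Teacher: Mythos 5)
Your proof is correct and follows exactly the classical route that the paper itself relies on but does not write out: the paper cites Floquet (1883) and merely records the key ingredients in Remarks A.1--A.3 (the monodromy matrix $\tilde{\textbf{M}}=\textbf{X}^{-1}(0)\textbf{X}(T)$ with $\textbf{X}(t+T)=\textbf{X}(t)\tilde{\textbf{M}}$, and its logarithm $\textbf{B}=\ln(\tilde{\textbf{M}})/T$), which are precisely the two steps you carry out before setting $\textbf{P}(t)=\textbf{X}(t)e^{-\textbf{B}t}$. Your closing caveat is also well taken: as stated, the theorem asserts $\textbf{P}(t)\in\mathbb{R}^{N\times N}$ while allowing $\textbf{B}\in\mathbb{C}^{N\times N}$, which is inconsistent in general --- a real periodic factor requires a real logarithm of the monodromy matrix (guaranteed only after passing to period $2T$) --- though, as you note, this does not affect the paper's use of the result, since only the Floquet multipliers and exponents enter the analysis.
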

\begin{remark}
 Floquet theory guarantees that there exists a nonsingular constant matrix $\tilde{\textbf{M}} \in \mathcal{R}^{N\times N}$ such that $\textbf{X}(t+T)=\textbf{X}(t)\tilde{\textbf{M}}$ for all $t \in \mathcal{R}$ and $\tilde{\textbf{M}}=\textbf{X}^{-1}(0)\textbf{X}(T)$ \cite{ floquet1883equations, gokccek2004stability}. Matrix $\tilde{\textbf{M}}$ is called the monodromy matrix.
\end{remark}
\begin{remark} If $\textbf{X}(0)=I$, then $\tilde{\textbf{M}}=\textbf{X}(T)$. 
\end{remark}
\begin{remark}
Because $\tilde{\textbf{M}}$ is non-singular, we obtain $\tilde{\textbf{M}}=e^{\textbf{B}T}$, or equivalently, $\textbf{B} = \ln(\tilde{\textbf{M}})/T$.
\end{remark}

Theorem~\ref{floquet_cont} implies that the solution of \eqref{linear_ode} is represented as 
\begin{equation}
\textbf{x}(t)=\textbf{X}(t)\textbf{x}(0)=\textbf{P}(t)e^{\textbf{B}t}\textbf{x}(0),
\label{floquet_solutuon}
\end{equation}
where $\textbf{x}(0)$ is the initial state.

Each eigenvalue $\lambda_i$ of the monodromy matrix is known as the Floquet or characteristic multiplier. The eigenvalues of $\textbf{B}$ are known as Floquet or characteristic exponents, and they are equal to $\ln(\lambda_{i})/ T$.  We define the largest Floquet exponent by
\begin{equation}
\lambda_\text{F} = \frac{\ln \left| \lambda_{\max} \right|}{T},
\end{equation}
where $\lambda_{\max}$ is the Floquet multiplier with the largest modulus. The largest Floquet exponent gives the average growth rate along the dominant eigenvector direction over one period. 
The equilibrium $\textbf{x}(t) = 0$ of the dynamical system given by Eq.~\eqref{linear_ode}, corresponding to the disease-free equilibrium in our context, is asymptotically stable if and only if $\lambda_{\text{F} }< 0$.

We are interested in $T$-periodic and switching ${\bf M}(t)$, that is, ${\bf M}(t)$ is constant for some duration, then it discontinuously switches to a different matrix that is used for another duration of time, and so on. In this case, the Floquet theory is simplified as follows.
\begin{thm} \label{floquet_piece_switch}[Floquet's theorem for linear dynamical systems with switching matrices \cite{gokccek2004stability}] 
Let $\textbf{M}(t)$ be a switching  $T$-periodic matrix given by
\begin{equation}
\textbf{M}(t)  = 
\begin{cases}
         \textbf{M}^{(1)} &  0\leq t < \tau_1, \\
        \textbf{M}^{(2)} & \tau_1 \le t < \tau_1 + \tau_2,\\
            \; \; \vdots & \quad \quad \; \vdots\\
            \textbf{M}^{(\ell)} & \tau_1 + \cdots + \tau_{\ell-1} \le t < T.
\end{cases}
\label{switching_network}
\end{equation}
Then, the monodromy matrix is represented as
\begin{equation}
\tilde{\textbf{M}} = e^{\tau_{\ell} \textbf{M}^{(\ell)}} e^{\tau_{\ell-1} \textbf{M}^{(\ell-1)}}
\cdots e^{\tau_1 \textbf{M}^{(1)}}.
\label{switching_monodromy}
\end{equation}
\end{thm}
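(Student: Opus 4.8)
The plan is to construct the fundamental matrix explicitly by solving the linear system separately on each interval of constancy of $\textbf{M}(t)$, and then to identify the monodromy matrix with $\textbf{X}(T)$. First I would normalize the fundamental matrix by the initial condition $\textbf{X}(0) = \textbf{I}$. With this choice, the remark stating that $\textbf{X}(0) = \textbf{I}$ implies $\tilde{\textbf{M}} = \textbf{X}(T)$ reduces the theorem to computing the single matrix $\textbf{X}(T)$.

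Next I would set $t_0 = 0$ and $t_k = \tau_1 + \cdots + \tau_k$ for $k \in \{1, \ldots, \ell\}$, so that $t_\ell = T$ and the switching instants are exactly the $t_k$. On each open subinterval $(t_{k-1}, t_k)$, the matrix $\textbf{M}(t)$ from Eq.~\eqref{switching_network} equals the constant $\textbf{M}^{(k)}$, so Eq.~\eqref{linear_ode} becomes the constant-coefficient system $\textbf{X}'(t) = \textbf{M}^{(k)} \textbf{X}(t)$. Its solution is $\textbf{X}(t) = e^{(t - t_{k-1}) \textbf{M}^{(k)}} \textbf{X}(t_{k-1})$; evaluating at the right endpoint gives the one-step relation $\textbf{X}(t_k) = e^{\tau_k \textbf{M}^{(k)}} \textbf{X}(t_{k-1})$.

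Then I would iterate this recursion from $k = 1$ up to $k = \ell$. Because the solution $\textbf{X}(t)$ of a linear system with bounded, piecewise-continuous coefficient matrix is continuous (indeed absolutely continuous) even across the instants where $\textbf{M}(t)$ jumps, the endpoint value produced by one piece is precisely the initial value fed into the next piece. Composing the $\ell$ relations and using $\textbf{X}(0) = \textbf{I}$ yields $\textbf{X}(T) = \textbf{X}(t_\ell) = e^{\tau_\ell \textbf{M}^{(\ell)}} e^{\tau_{\ell-1} \textbf{M}^{(\ell-1)}} \cdots e^{\tau_1 \textbf{M}^{(1)}}$, which together with $\tilde{\textbf{M}} = \textbf{X}(T)$ is exactly Eq.~\eqref{switching_monodromy}.

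The main obstacle is bookkeeping rather than analysis: since the matrices $\textbf{M}^{(k)}$ generally do not commute, the order of the factors is essential, and the recursion must be assembled so that the factor for the first interval, $e^{\tau_1 \textbf{M}^{(1)}}$, sits on the right and successively later intervals are prepended on the left. The only genuinely analytic point to justify is the continuity of $\textbf{X}(t)$ at the switching times, which follows from the standard existence and uniqueness theory for linear ODEs with piecewise-continuous coefficients and is what legitimizes gluing the piecewise exponential solutions into a single product.
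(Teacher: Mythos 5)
Your proposal is correct: the piecewise constant-coefficient solution $\textbf{X}(t_k)=e^{\tau_k \textbf{M}^{(k)}}\textbf{X}(t_{k-1})$, glued together by continuity of $\textbf{X}(t)$ at the switching instants and composed with $e^{\tau_1 \textbf{M}^{(1)}}$ rightmost, is exactly the standard argument behind this result. The paper itself states the theorem without proof, citing the literature, so there is no in-paper proof to compare against; your derivation supplies the expected one, and you correctly flag the two points that matter, namely the ordering of the non-commuting exponential factors and the continuity of the fundamental matrix across the jumps of $\textbf{M}(t)$.
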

\begin{remark}
Matrix $\tilde{\textbf{M}}$ for the deterministic SIS model on periodic switching networks is the same as $\mathcal{T}$ given by Eq.~\eqref{linearize_sismodel_conts_temporal_new}. 
\end{remark}


\section{Simulations of the stochastic SIS model and nonlinear IBA \label{appendix:gillespie}}

We used SBMs to generate the networks with two blocks, each of which represents a community and nodes within the same block have statistically the same connectivity patterns ~\cite{abbe2017community, lee2019review}. Each community has $N/2 = 1000$ nodes. Each edge $(i, j)$ exists with probability $[\textbf{A}^{(1)}]_{c_ic_j}/N$ for the first duration of one cycle and $[\textbf{A}^{(2)}]_{c_ic_j}/N$ for the second duration, where $c_i \in \{1, 2 \}$ is the block to which the $i$th node belongs. The $2\times 2$ matrix informing the SBM is the one used in Fig.~\ref{Floquet_multipliers_of_network}. In this manner, we obtain $2000 \times 2000$ adjacency matrices $\textbf{A}^{(1)}$ and $\textbf{A}^{(2)}$, and hence $\hat{\textbf{M}}$. 

\begin{figure}[t]
\begin{center}
\includegraphics[width = 1.01\linewidth]{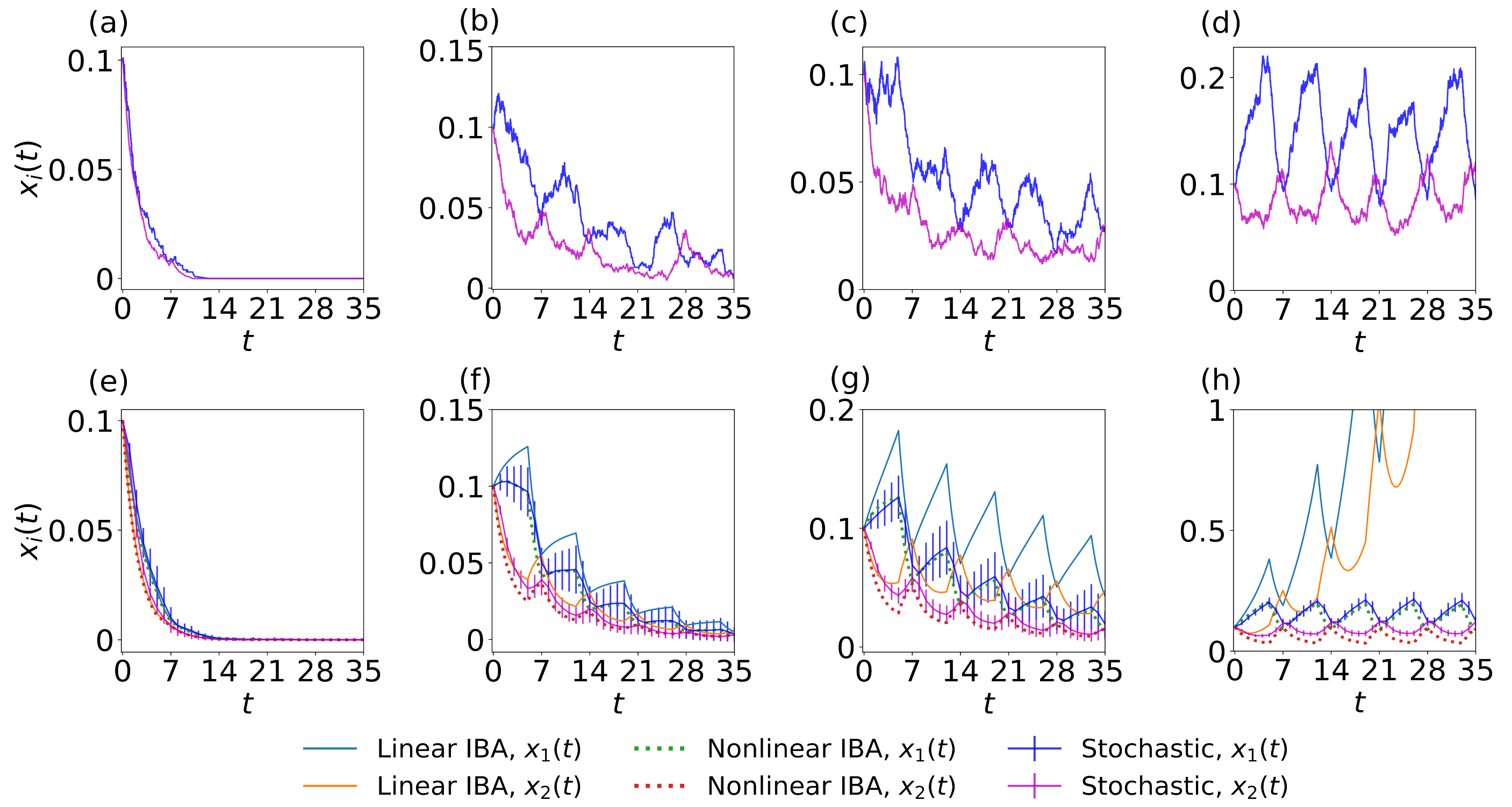}
     \caption{
 Agent-based simulations of the SIS model on periodic switching networks. (a)--(d) Results for a single run. (e)--(h) Comparison between the results of the agent-based simulations, linear IBA, and nonlinear IBA.
The value of $\mu$ $(=0.5)$, $\textbf{A}^{(1)}$, $\textbf{A}^{(2)}$, and the switching schedule (i.e., $\ell=2$, $T=7$, and $\tau_1 = 5$) are the same as those used in Fig.~\ref{Floquet_multipliers_of_network}. (a) and (e): $\beta=0.01$. (b) and (f): $\beta=0.028$. (c) and (g): $\beta = 0.032$. (d) and (h): $\beta = 0.04$. 
In (e)--(h), the lines and error bars for the agent-based simulations represent the mean and standard deviation on the basis of 100 simulations with the same initial condition. 
In all panels, we initially infected 10\% of nodes in each of the two communities, i.e., $(x_1(0), x_2(0)) = (0.1, 0.1)$.}
     \label{linear_nonlinear_SIS}
     \end{center}
\end{figure} 

We show in Fig.~\ref{linear_nonlinear_SIS}(a)--(d) the time course of $x_1(t)$ and $x_2(t)$ in a single run
of stochastic SIS dynamics simulated using the Gillespie algorithm.
We employed the four values of $\beta$ used in Fig.~\ref{Floquet_multipliers_of_network}(b)--(e).
The fluctuations in $x_1(t)$ and $x_2(t)$ are large because the results shown are from a single run of stochastic SIS dynamics. However, there is some indication of the Parrondo paradox and anti-phase oscillation in Fig.~\ref{linear_nonlinear_SIS}(b) and (c).
For a value of $\beta$ at which the paradox occurs according to our theoretical analysis (i.e., $\beta=0.032$; shown in Fig.~\ref{linear_nonlinear_SIS}(c)), we further carried out a single run of stochastic simulation on each of the two static networks with $N=2000$ nodes constituting the periodic switching network used in Fig.~\ref{linear_nonlinear_SIS}. Figure~\ref{stochastic_parrondo_static}(a) and (b) represents an example time course of $x_1(t)$ and $x_2(t)$ on the first and second static network, respectively. We find that $x_1(t)$ and $x_2(t)$ do not decay exponentially over time in each static network in this run.
Therefore, although not all the runs show this behavior, the Parrondo paradox can occur in a single run of stochastic simulation. 

\begin{figure}[t]
\begin{center}
\includegraphics[scale=0.3]{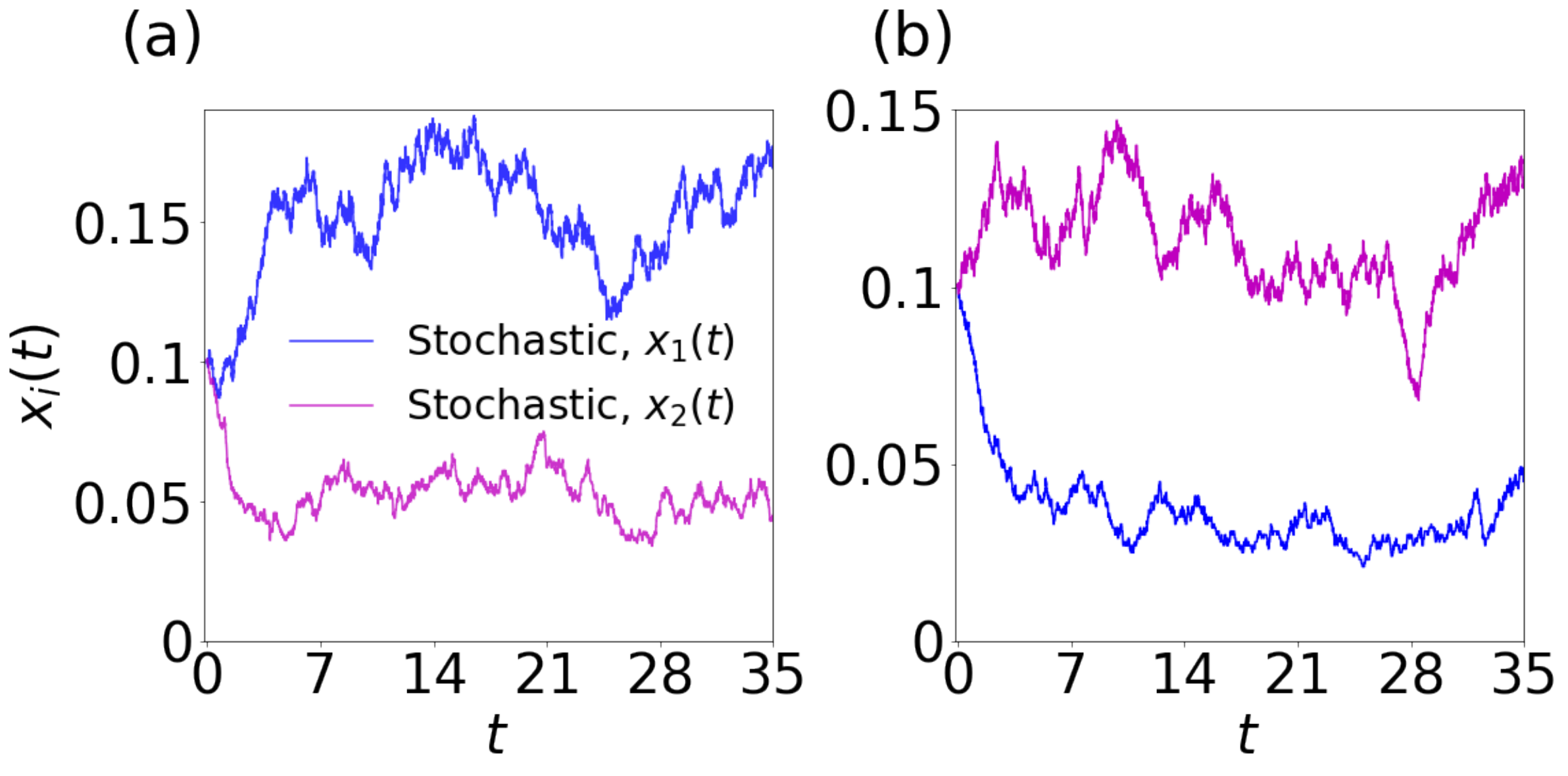}
     \caption{A single run of stochastic SIS simulation over the two static networks with $N=2000$ nodes generated by the SBM. The block adjacency matrices $\textbf{A}^{(1)}$ and $\textbf{A}^{(2)}$ are the same as those
used in Fig.~\ref{linear_nonlinear_SIS}. We set $\beta=0.032$, corresponding to Fig.~\ref{linear_nonlinear_SIS}(c) and (g), and $\mu=0.5$. (a) $\textbf{A}^{(1)}$. (b) $\textbf{A}^{(2)}$.}
     \label{stochastic_parrondo_static}
     \end{center}
\end{figure} 

We then simulate the same stochastic SIS dynamics 100 times for each value of $\beta$ and average $x_1(t)$ and $x_2(t)$ over the 100 runs. We show the results, together with the standard deviation of $x_1(t)$ and $x_2(t)$, in Fig.~\ref{linear_nonlinear_SIS}(e)--(h). We find a consistent tendency of the Parrondo paradox and anti-phase oscillation in the averaged time course of $x_1(t)$ and $x_2(t)$ (see Fig.~\ref{linear_nonlinear_SIS}(f) and (g)).

For comparison, we also show numerically simulated solutions of the nonlinear IBA, Eq.~\eqref{nonlinear_matrix_SIS}, and its linearized variant, Eq.~\eqref{linearizesismodel}, for the $2\times 2$ periodic switching network in the same figure. The time courses of $x_1(t)$ and $x_2(t)$ for the linearized IBA shown in Fig.~\ref{linear_nonlinear_SIS}(e)--(h) are the same as those shown in Fig.~\ref{Floquet_multipliers_of_network}(b)--(e). We observe that the nonlinear IBA shows a less pronounced Parrondo paradox and that the average time courses of the stochastic simulations when the paradox occurs are close to those for the nonlinear IBA (see Fig.~\ref{linear_nonlinear_SIS}(f) and (g)). We conclude that stochastic agent-based simulations also show the Parrondo paradox, albeit to a lesser extent than the linearized IBA with a small number of communities shown in the main text.

\section{Anti-phase oscillatory behavior with examples }\label{appendix:antiphase_example}

We measure $q$, i.e., the extent of the anti-phase oscillation defined in section~\ref{sec:decomposition}, for the eight two-node periodic switching networks that we used in Fig.~\ref{perturbation}. The upper subpanel in each of Fig.~\ref{fraction_of_timeseries}(a)--(h) shows $\lambda_{\text{F}}$, $\lambda_{\max}(\textbf{M}^{(1)})$, and $\lambda_{\max}(\textbf{M}^{(2)})$, and the range of $\beta$ in which the paradox occurs, similar to Fig.~\ref{Floquet_multipliers_of_network}.
We find that the first four networks showing the Parrondo paradox (i.e., Fig.~\ref{fraction_of_timeseries}(a)--(d)) produce large $q$ values (i.e., above approximately $0.8$) when $\beta$ is close to the epidemic threshold. The largest $q$ value attained in these four networks is $1$, indicating perfect anti-phase dynamics between $x_1(t)$ and $x_2(t)$. The range of $\beta$ in which $q=1$ roughly coincides with where the Parrondo paradox occurs. However, in one of the four periodic switching networks, shown in Fig.~\ref{fraction_of_timeseries}(c), the range of $\beta$ in which $q=1$ extends to larger values of $\beta$ far beyond where the paradox occurs. In contrast, for the other four networks that do not show the paradox (Fig.~\ref{fraction_of_timeseries}(e)--(h)), the largest $q$ value is substantially smaller than for the first four networks, while $q$ values are peaked near the epidemic thresholds for each momentarily static network, except in Fig.~\ref{fraction_of_timeseries}(g). 
Note that $q=0.5$ implies that the anti-phase behavior exists only half of the time so one cannot really regard this as anti-phase behavior. 
In Fig.~\ref{fraction_of_timeseries}(g), we find $q=0$, which is the complete absence of anti-phase dynamics, for all values of $\beta$.

\begin{figure}[t]
\begin{center}
\includegraphics[width = 1.01\linewidth]{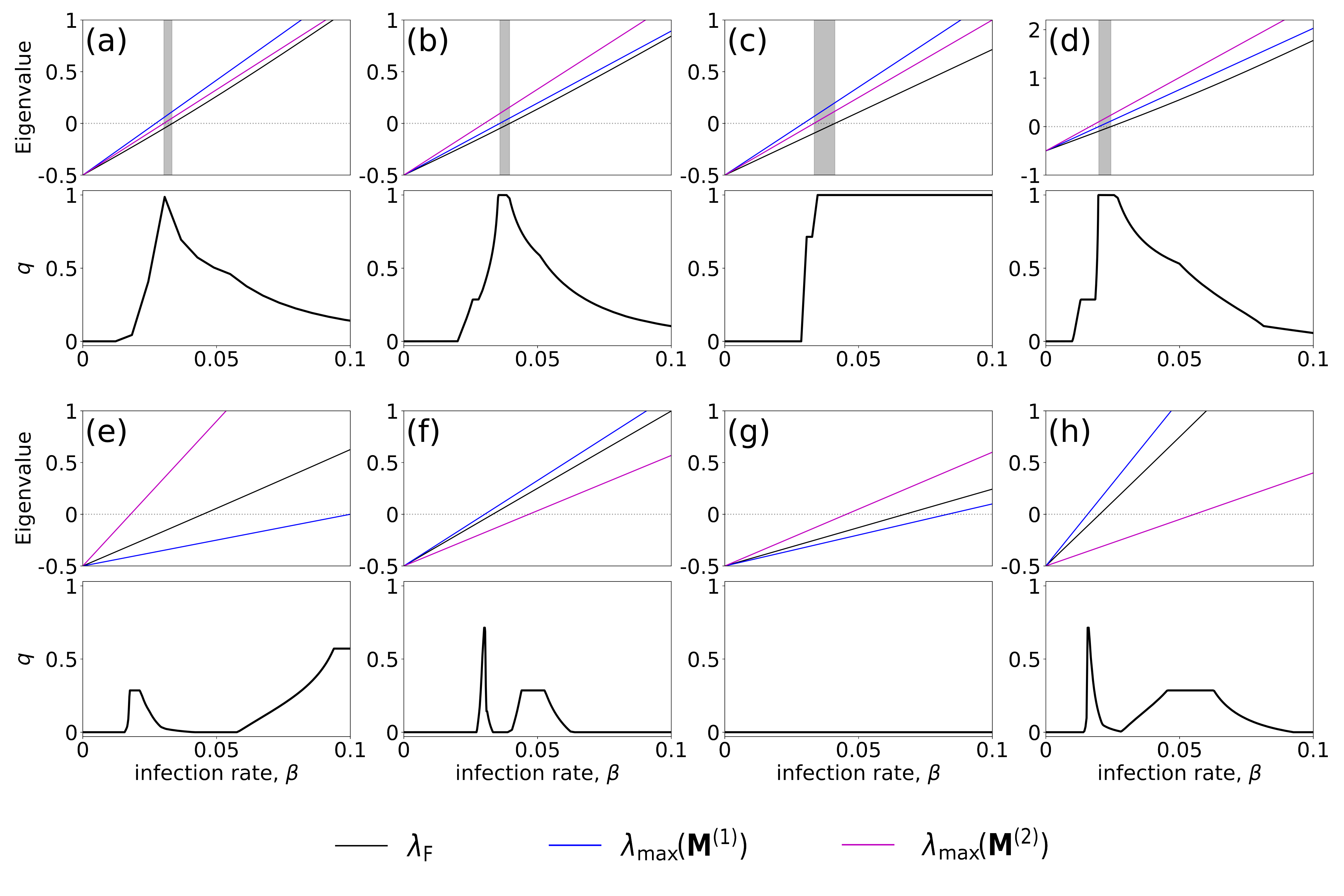}
\caption{Fraction of the time showing anti-phase behavior, $q$, in eight two-node periodic switching networks with $\ell = 2$, $T=7$, and $\tau_1 = 5$. The eight networks used are the same as those used in Fig.~\ref{perturbation}. We recall that the Parrondo paradox behavior occurs for the first four networks in a range of $\beta$ values, i.e., shaded regions, and it does not occur for the last four networks. We compare $q$ with $\lambda_{\text{F}}$, $\lambda_{\max}(\textbf{M}^{(1)})$, and $\lambda_{\max}(\textbf{M}^{(2)})$ as a function of the infection rate, $\beta$, for each periodic switching network. We set $\mu = 0.5$. The lines for $\lambda_{\text{F}}$, $\lambda_{\max}(\textbf{M}^{(1)})$, and $\lambda_{\max}(\textbf{M}^{(2)})$ are the same as those shown in Fig.~\ref{perturbation}. Observe that there is a strong association between the Parrondo paradox behavior and the appearance of large $q$ values.
}
\label{fraction_of_timeseries}
\end{center}
\end{figure}

\section*{Acknowledgments}

D.T. and N.M. acknowledge support from National Science Foundation (under Grant No. 2052720).  D.T. also acknowledges National Science Foundation award DMS-2401276 and Simon’s Foundation Grant 587333. N.M. also acknowledges support from the Japan Science and Technology Agency (JST) Moonshot R\&D (under Grant No. JPMJMS2021), the National Science Foundation (under Grant No. 2204936), and JSPS KAKENHI (under Grant No. JP 21H04595, 23H03414, 24K14840, W24K030130).

  \bibliography{references}

\end{document}